\newcommand{\rf}[1]{(\ref{#1})}
\newcommand{\oh}{\frac{1}{2}}
\newcommand{\beq}{\begin{equation}}
\newcommand{\eeq}{\end{equation}}
\newcommand{\bea}{\begin{eqnarray}}
\newcommand{\eea}{\end{eqnarray}}
\newcommand{\beas}{\begin{eqnarray*}}
\newcommand{\eeas}{\end{eqnarray*}}
\newcommand{\beqs}{\begin{displaymath}}
\newcommand{\eeqs}{\end{displaymath}}
\newcommand{\br}{\langle}
\newcommand{\kt}{\rangle}
\newcommand{\cT}{{\cal T} }
\newcommand{\ben}{\begin{equation}}
\newcommand{\een}{\end{equation}}
\newcommand{\bdm}{\begin{displaymath}}
\newcommand{\edm}{\end{displaymath}}
\newcommand{\C}[1]{{\mathcal{#1}}}
\newcommand{\abs}[1]{\vert #1\vert}
\newcommand{\expect}[2]{\left\langle \, #1\, \right\rangle_{#2}}
\newcommand{\half}{\frac{1}{2}}
\newcommand{\third}{\frac{1}{3}}
\newcommand{\quarter}{\frac{1}{4}}
\newcommand{\lderiv}[1]{#1\frac{\partial }{\partial #1}}
\newcommand{\asqrt}[1]{{#1}^\half}
\newcommand{\absd}[1]{\|{#1}\|}
\renewcommand{\twoheadrightarrow}{\mathrel{\mathrlap{\rightarrow}
\mkern1mu\rightarrow}}
\renewcommand{\bar}{\overline}
\renewcommand{\tilde}{\widetilde}
\renewcommand{\hat}{\widehat}
\newcommand{\Cfinite}{{\mathcal C}_{\text{fin}}}
\begin{document}

\title*{From Trees to Gravity }
% Use \titlerunning{Short Title} for an abbreviated version of
% your contribution title if the original one is too long
\author{Bergfinnur Durhuus,  Thordur Jonsson and John Wheater}
% Use \authorrunning{Short Title} for an abbreviated version of
% your contribution title if the original one is too long
\institute{Bergfinnur Durhuus \at Department of Mathematical Sciences, University of Copenhagen, Universitetsparken 5, 2100 Copenhagen, Denmark,  \email{durhuus@math.ku.dk}
\and Thordur Jonsson \at The Science Institute, 
University of Iceland, Dunhaga 3, 107 Reykjavik, Iceland, \email{thjons@hi.is}
\and John Wheater \at Rudolf Peierls Centre for Theoretical Physics, Department of Physics, University of Oxford, Parks Road, Oxford OX1 3PU, UK, \email{john.wheater@physics.ox.ac.uk} }
%
% Use the package "url.sty" to avoid
% problems with special characters
% used in your e-mail or web address
%
\maketitle

\abstract{In this article we study two related models of quantum geometry: generic random trees
and two-dimensional causal triangulations.  The Hausdorff and spectral dimensions that arise
in these models are calculated and their relationship with the structure of the underlying
random geometry is explored.  Modifications due to interactions with matter fields are also
briefly discussed.  The approach to the subject is that of classical statistical mechanics and
most of the tools come from probability and graph theory.}%\footnote{This is a contribution to the \emph{Handbook of Quantum Gravity} which will be published in the beginning of 2023. It will appear as a chapter in the section of the handbook entitled \emph{Causal Dynamical triangulations}.}.

\keywords{Causal triangulation, Graph theoretic, Tree correspondence, Generic trees, Spectral dimension, Hausdorff dimension, Scaling amplitude}

\medskip

\noindent{\bf Note} {This is a contribution to the \emph{Handbook of Quantum Gravity} which will be published in the beginning of 2023. It will appear as a chapter in the section of the handbook entitled \emph{Causal Dynamical triangulations}.}

\section{Introduction}\label{intro}

In this contribution we adopt a graph theoretic and probabilistic perspective on two-dimensional Causal Dynamical Triangulations ({CDT}s). We consider causal triangulations (CTs) as a particular class of planar graphs (defined in Section \ref{subsec:CDT}) that are distributed according to a probabilistic law. Our primary goal is then to apply tools from graph and probability theory to analyse the large scale geometric properties of these models. In particular we exploit, in a variety of contexts, the correspondence (established in Section \ref{subsec:bijection}) between  CTs and planar tree graphs (defined in Section \ref{sec:tree}). In the process we will define and study the generic random tree model, both because of its relation to branching processes and CTs and because of its independent interest as a testing ground for investigating various aspects of random graph models in general.

We will consider two different ensembles of CTs. The first, much studied in the literature, is the grand canonical ensemble (defined in Section \ref{sec:gce}) which is based on an expansion in the size of finite graphs. We use the planar tree correspondence to give an alternative account of the scaling behaviour of loop correlation functions in the vicinity of the radius of convergence and determine the associated scaling Hausdorff dimension $d_H$.
 The second ensemble, to date less studied, is based on infinite CTs and can be thought of as an infinite volume limit suitable for studying local geometric characteristics of typical CTs. Again, we use the planar tree correspondence to investigate, in Sections \ref{sec:hausdorff} and \ref{sec:spectraldim}, the fractal properties of CTs; in particular, we determine the local Hausdorff dimension $d_h$ as well as the spectral dimension $d_s$ (defined in Section \ref{sec:5.1}).  Indeed, for CTs all three dimension exponents, $d_H, d_h$ and $d_s$ have the value $2$. That this is not generally the case is illustrated by the generic trees for which $d_h=d_H=2$, but whose spectral dimension is $d_s=\frac{4}{3}$ as shown in Section \ref{sec:5.3}. 

It is obviously of interest to understand the extent to which the results outlined above are universal, and to investigate how they extend to the more general case of CTs coupled to statistical mechanical systems such as dimers and Ising spins. A few remarks on the rather sparse existing analytical results in this direction are collected in Section \ref{sec:matterfields}. Finally in Section \ref{sec:wherenext} we draw together the possible avenues for future research. 

\section{Preliminary on probability\label{sec:prelim}}

Before embarking on the main subject it is worth pointing out the difference of viewpoints represented by the two ensembles discussed above. In the grand canonical ensemble, the quantities of interest are defined as sums over graphs of finite size which are each attributed a positive finite weight. The scaling limit then involves adjusting the coupling constants so that large surfaces yield the dominant contribution to the statistical sums involved.  This procedure enables the construction of the continuum limit of certain correlation functions, but does not construct the limiting distribution of continuum surfaces -- although that would be an important achievement, see \cite{Bjornberg:2022} and references therein.
On the other hand, the infinite volume limit does involve the construction of a probability distribution of infinite CTs as a limit of distributions of finite CTs of fixed volume. 

A simple illustration of the basic philosophy of this construction is obtained by considering standard random walk on the hypercubic integer lattice $\mathbb Z^d$. Here, a walk (or path) $\omega$  is a sequence (finite or infinite) of points $\omega_0,\omega_1,\omega_3,\dots$ in $\mathbb Z^{d}$ such that $\omega_i$ and $\omega_{i+1}$ have distance $1$ for all $i$. If $\omega$ consists of $N+1$ points we say it has length $N$ and denote it by $|\omega|$. One then attributes a weight $p(\omega) = (2d)^{-|\omega|}$ to a finite path $\omega$ starting at, say, the origin $0$. Considering only paths $\omega$ of a fixed length $N$, the weights sum up to $1$, i.e. $p$ defines a probability distribution $p_N$ on paths of length $N$. Moreover, these distributions are  clearly consistent for different values of $N$ in the sense that if $N<M$ and we consider the set consisting of all paths $\omega^\prime$ of length $M$ coinciding with a given path $\omega$ in the first $N$ steps, then the weights of those paths $\omega^\prime$  add up to $p(\omega)$. This property leads in a natural way to a probability distribution on the space of infinite paths starting at $0$ as follows. Given an infinite path $\omega$, let ${\cal B}_{\frac 1N}(\omega)$ denote the set of finite or infinite paths coinciding with $\omega$ in the first $N$ steps and define the probability of this set to be 
$$
P({\cal B}_{\frac 1N}(\omega))= (2d)^{-N}\,.
$$
Recalling that $p_M$ attributes a weight $0$ to all paths not of length $M$, the compatibility property implies that $P({\cal B}_{\frac 1N}(\omega))$
%
%$P(\cal B_{\frac 1R}(\omega))$ 
is determined by the finite size distributions as 
\begin{equation}\label{limmeas}
P({\cal B}_{\frac 1N}(\omega)) = \lim_{M\to\infty} p_M({\cal B}_{\frac 1N}(\omega))\,.
\end{equation}
It is convenient to regard ${\cal B}_{\frac 1N}(\omega) $ as a ball of radius $\frac 1N$ around $\omega$ in the space $\Omega$ of all paths (finite or infinite) starting at $0$, where the distance between any two different  paths $\omega, \omega^\prime$  is defined as
$$
d(\omega,\omega^\prime) = \frac{1}{N+1}\,,
$$
with $N=\mbox{\rm max}\{n\mid \omega_0=\omega^\prime_0, \dots,\omega_n=\omega^\prime_n$\}. It is a consequence of a rather general result on sequences of probability measures, details of which can be found in \cite{billingsley}, that the limiting values of ball probabilities as given by \eqref{limmeas} uniquely specify  a probability distribution on $\Omega$, thus defining an ensemble of random paths, called random walk in $\mathbb Z^d$. By letting $N$ grow large in the preceding discussion, it should be clear that individual paths have vanishing $P$-probability. Since the set of finite paths is countable, it follows that the whole set of finite paths has vanishing $P$-probability, which is also expressed by saying that random walks are almost surely infinite, or that $P$ is concentrated on infinite walks. A stronger statement is that random walks are almost surely unbounded (in $\mathbb Z^d$), which we leave for the reader to figure out (it follows from the fact that random walk in a finite connected graph is recurrent, in the language of Section \ref{sec:5.1}). 

The strategy for constructing the infinite volume limit of generic trees in Section \ref{sec:tree}, and of CTs in Section \ref{subsec:bijection}, follows the procedure sketched above quite closely. It is a recurrent theme in Sections \ref{sec:hausdorff} and \ref{sec:spectraldim} to establish certain properties possessed by \emph{typical} CTs, since the existence of atypical ones occurring with vanishing probability must be expected. A standard tool for establishing such %almost sure 
properties is the Borel-Cantelli lemma. To formulate this result, recall first that an \emph{event} in a probability space $\Omega$ is simply a subset $A$ of $\Omega$ with an associated probability $P(A)\geq 0$. In particular, $P(\Omega)=1$ and we say that an event $A$ occurs \emph{almost surely} (a.s.) if $P(A)=1$, which is equivalent to $P(\Omega\setminus A)=0$. Moreover, probabilities of mutually exclusive events add up to the probability of their union, i.e. if $A_1, A_2,\dots$ are events such that $A_i\cap A_j =\emptyset$ for $i\neq j$ then  
 $$
    P\left(\bigcup_i A_i\right)=\sum_i P(A_i)\,.
    $$
In the example of random walk above, the set 
$$
E_R = \{\omega :%\mid 
\|\omega_n\|\leq R\; \mbox{for all $n$}\}\,,
$$
where $\|\cdot\|$ denotes the Euclidean distance to the origin in $\mathbb Z^{d}$, is the event that a walk is confined to the ball of radius $R$ around the origin and $\cup_{R=1}^\infty E_R$ is the event consisting of bounded walks. 

Now let $A_1, A_2, \ldots$ be an arbitrary sequence of events
and suppose 
$$
\sum_{n=1}^\infty  P(A_n)<\infty .
$$
Then, since 
$$
P(A_k\cup A_{k+2}\cup A_{k+3}\dots)\, \leq\,\sum_{n=k}^\infty P(A_n)\,,
$$
this implies that
$P(A_k\cup A_{k+2}\cup A_{k+3}\dots)\,\to\, 0$  as $k\to\infty$,
i.e., the probability that $A_n$ occurs for some $n\geq k$ tends to zero as $k$ grows large. Hence,
the probability that infinitely many of the events $A_n$ happen
is 0, which is the content of the Borel-Cantelli lemma.  

In our applications we typically use the Borel-Cantelli lemma to bound the size of certain graphs.  For example, suppose $\Omega$ consists of infinite graphs and for any given $n$ we associate a finite graph $G_n$ to each $G\in\Omega$ and let $|G_n|$ denote the number of edges in $G_n$. If 
$(a_n)$ is a sequence of positive numbers and
$$
\sum_{n=1}^\infty P(\{G: %\mid
|G_n|>a_n\})\,<\,\infty\,,
$$
the Borel-Cantelli lemma implies that a.s. $|G_n|>a_n$ only 
for a finite number of $n$'s
and hence, $|G_n|\leq a_n$ for $n$ large enough with probability $1$.

\section{Random trees}\label{sec:tree}

Many of the discrete structures used to model quantum gravity can be viewed 
as graphs or graphs with some added structure.  We recall that a 
graph $G$ is a 
set of vertices $V(G)$ and a set of edges $E(G)$ which are unordered pairs of 
distinct vertices. 
In physics applications the edges are sometimes called links.
The graph is finite if the number of vertices is finite, otherwise it is infinite.
The {\it degree} (sometimes called order or valency) of 
a vertex $v$ is the number of edges containing $v$, which will be denoted by $\sigma_v$. We say that a graph is rooted if one vertex is singled out and called the {\it root}.

A {\it path} of length $n\geq 1$ in a graph is a sequence of oriented edges 
$$
(v_0,v_1), (v_1,v_2), (v_2,v_3) \ldots (v_{n-1},v_n).
$$
If $\eta$ is a path we use the notation $v\in \eta$ to indicate that one of the edges in $\eta$ has $v$ as an endpoint.
   We say that the path is {\it closed} if
$v_0=v_n$ and {\it simple} if 
all the vertices $v_i$ are different except possibly $v_0$ and $v_n$ which happens when the 
path is closed. A simple and closed path will be called a \emph{cycle} and two cycles are considered identical if one is obtained from the other by a cyclic permutation of the edges. A trivial path consists by definition of a single vertex and is considered closed and simple and of length $0$.

A graph is connected if there is a path between any two vertices.
The distance (also called graph distance) between two vertices in a connected 
graph is the length of the shortest path  
connecting them. 
The graph spanned by a subset $V_0$ of vertices of a given graph $G$ consists of $V_0$ itself and those edges in $E(G)$ that connect the
vertices of $V_0$. 
The ball of radius $R$ centred at a vertex $v$ of $G$ is the sub-graph of $G$ spanned by the vertices at graph distance less than or equal to $R$ from $v$ and will be denoted by $B_R(G;v)$. If $v$ is the root of $G$ the reference to $v$ will in general be omitted. Similarly, the boundary of $B_R(G;v)$, i.e. the sub-graph spanned by the vertices at distance $R$ from $v$ will be called $S_R(G;v)$.
The size $|G|$ of a graph $G$ is the number of elements in $E(G)$.
The number of vertices in $G$ will be denoted $\|G\|$ and, if $v$ is the root of $G$, we set $D_R(G)=\|S_R(G;v)\|$.  

We define a tree to be a connected
graph which does not contain any cycle.  We say that a graph is planar
if it is embedded in the plane such that no two edges intersect.   Note that most graphs cannot
be embedded in this way
but all trees can and a given tree can in general be embedded in many different ways.
We use the term planar tree to refer to a tree with a fixed embedding up to orientation preserving 
homeomorphisms of the plane. Alternatively, a planar tree can be defined as a purely combinatorial 
object, see e.g.\ \cite{chassaing:2006}.
 
 \subsection{The generic random tree}\label{sec:gentree}
Let $\cT$ be the set of all
planar rooted trees, finite or infinite, such that the root, $r$, is
of degree 1 and all vertices have finite degree. 
 The subset
of $\cT$ consisting of trees of fixed size $N$ will be denoted
$\cT_N$.  The subset consisting of the infinite trees will be denoted 
$\cT_\infty$.  Given a tree $T\in\cT$ and non-negative integer $R$, the
ball $B_R(T)$ is again a rooted planar tree and its boundary $S_R(T)$ consists of $D_R(T)$ isolated vertices, whose distance $R$ from the root will also be called their height in $T$. We let 
$T\setminus r$ denote the set of all vertices in $T$ except the root and note that $\|T\|=|T|+1$ for any finite tree $T$.

For later use we define the distance
$d_{\cT}(T,T')$ between two 
trees $T, T'$ as $R^{-1}$, where $R$ is the
radius of the largest ball around the root  common to $T$ and
$T'$. We can view $\cT$ as a metric space with metric
$d_{\cT}$, see \cite{Durhuus20034795} for some of its properties. In particular, for any positive integer $R$ the ball in $\cT$ 
of radius $R^{-1}$ is given by 
\beq\label{defBall}
{\cal B}_{\frac 1R}(T) = \{T^\prime : %\mid
B_R(T^\prime) = B_R(T)\}\,,
\eeq
i.e., it consists of all trees coinciding with $T$ up to height $R$. If $T$ has $K=D_R(T)$ vertices at 
height $R$, the trees in ${\cal B}_{\frac 1R}(T)$ are obtained by grafting arbitrary trees $T_1,\dots, T_K$ onto those $K$ vertices, i.e., identifying the  root edge of $T_i$ with the edge in $T$ incident on the $i$'th vertex in $S_R(T)$. In this way, there is a one-to-one correspondence 
between trees in ${\cal B}_{\frac 1R}(T)$ and $K$-tuples of trees in ${\cal T}$, that we shall denote by 
\beq\label{ballcorresp}
F_R: {\cal B}_{\frac 1R}(T)\,\to\, {\cal T}^{K}\,.
\eeq

An ensemble of random graphs is a set of graphs equipped with a probability
measure.
In this section we define a class of ensembles of infinite trees of relevance to quantum gravity, 
called \emph{generic tree ensembles}, and discuss some of their properties.  We proceed 
by first defining probability measures
on finite trees of fixed size $N$ and then 
showing how to obtain a limit as $N$ tends to infinity. 

Given a set of non-negative {\it branching weights} $w_n,\,n\in\mathbb
N$, we define, in the spirit of classical statistical mechanics, 
the {\it finite volume partition functions}, $Z_N$, by
\beq\label{x1}
Z_N = \sum_{T\in\cT_N}\prod_{v\in T\setminus r}w_{\sigma_v}\,.
\eeq
 We assume $w_1>0$, since $Z_N$ vanishes
otherwise, and we also assume $w_n>0$ for some $n\geq 3$ since
otherwise only the linear chain of length $N$ would contribute to $Z_N$.
Under these assumptions the generating function $\phi$ for the branching weights,
\beq\label{geng}
\phi (z) = \sum_{n=1}^\infty w_n z^{n-1}\,,
\eeq
is strictly increasing and strictly convex on the interval $[0,z_c )$ with $\phi(0)=w_0$,
where $z_c$ is the radius of convergence for the series (\ref{geng}),
which we assume is positive.

It is well known, see e.g. \cite{bookADJ}, that the generating function
for the finite volume partition functions,
\beq\label{genZ}
Z(\zeta ) = \sum_{N=1}^\infty Z_N \zeta^N\,,
\eeq
satisfies the equation
\beq\label{fixZ}
Z(\zeta ) = \zeta \phi (Z(\zeta ))\,,
\eeq
where $\zeta$ is the fugacity associated to each edge. The proof of this 
identity is illustrated in Fig.\ref{fixequation}. If  the degree of the vertex next to the root is $n$, then
there are $n-1$ trees attached to it in addition to the root edge, which has weight $\zeta$.  
Summing over $n$ then yields equation \rf{fixZ}.

\begin{figure}[ht!]
\sidecaption[t]
\includegraphics[scale=1.0]{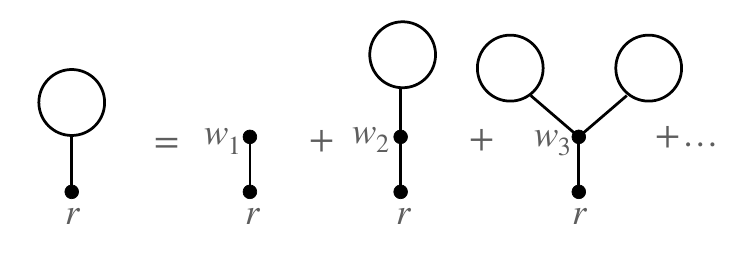}
\caption{A graphical illustration of the derivation of equation (\ref{fixZ}).}
\label{fixequation}
\end{figure}

From the properties of $\phi$ it follows that the straight line in the 
plane through the origin with slope $\zeta^{-1}$ intersects the graph of $\phi$ at least once 
(and at most twice) for $\zeta>0$ small enough, see Fig.\ref{Phi}. By \eqref{fixZ} and the fact that $Z(0)=0$ it 
follows that $Z(\zeta)$ is determined by the first intersection point for $\zeta$ small 
enough and that the solution persists for $\zeta<\zeta_0$ where $\zeta_0$ is the radius of 
convergence of the series (\ref{genZ}). 
Since $Z$ is an increasing function of $\zeta$, 
the limit
\beq\label{x2}
Z_0 = \lim_{\zeta\uparrow \zeta_0}Z(\zeta )
\eeq
is finite and $\leq z_c$. In the following we consider the \emph{generic} case where
\beq\label{genass}
Z_0<z_c\,.
\eeq
In this case, 
it follows from \eqref{fixZ} that the slope of the line through the origin that is tangent to the 
graph of $\phi$ equals $\zeta_0^{-1}$, i.e.
 $Z_0=Z(\zeta_0)$ is the unique positive solution to the equation
\beq\label{eqcrit1}
Z_0\phi'(Z_0) = \phi(Z_0)\,.
\eeq
The inequality (\ref{genass}) is the condition on the branching weights which singles out the generic ensembles of 
infinite trees to be defined below.  In particular, all sets of branching weights with infinite 
$z_c$ define a generic ensemble.

\begin{figure}[ht!]
\sidecaption[t]
\includegraphics[scale=1.0]{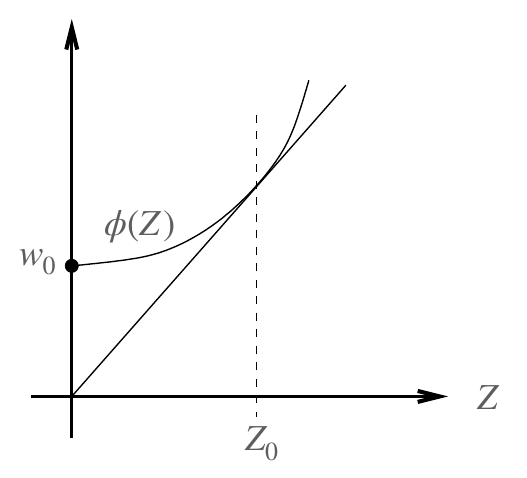}%\includegraphics{Phi.pdf}
\caption{An illustration of the intersection between the graph of $\phi$ and a straight line through the origin in the generic case.}
\label{Phi}
\end{figure}

In the special case $w_n=1$ for all $n$, that will be encountered frequently in this article, we have by \eqref{x1} that $Z_N=\sharp\cT_N$\footnote{We use $\sharp A$ to denote the number of elements in a set $A$.} and evidently $\phi(z)=(1-z)^{-1}$ by \eqref{geng}. Solving \eqref{fixZ} then gives 
\beq\label{ZZZ}
Z(\zeta )=\oh -\oh\sqrt{1-4\zeta },
\eeq
so $\zeta_0=\quarter$
and $Z_0=\oh$.  Taylor expanding this expression now yields
\beq\label{Catalan}
Z_N= C_{N-1} := \frac{(2N-2)!}{ N! (N-1)!}\,,
\eeq
where $C_N$ is the $N$th Catalan number.

Under the assumptions on the branching weights and assuming
(\ref{genass}) we may, in the general case,  Taylor expand $\phi$ around $Z_0$ in
(\ref{fixZ}) and use (\ref{eqcrit1}) to conclude that the analytic function $Z(\zeta)$ has a square root branch point at $\zeta_0$ and is given by
\beq\label{singZ}
Z(\zeta ) = Z_0 - \sqrt{\frac{2\phi(Z_0)}{\zeta _0\phi''(Z_0)}}\sqrt{\zeta_0-\zeta}+
O(\zeta _0-\zeta)\,,
\eeq
where the square root is chosen to be positive for $\zeta<\zeta_0$.
The asymptotic behaviour of $Z_N$ is then given by
\beq\label{eqn:flajolet}
Z_N= \sqrt{\frac{\phi(Z_0)}{2\pi \phi''(Z_0)}}N^{-\frac{3}{2}}\zeta_0^{-N}(1+O(N^{-1})),
\eeq
provided $Z_N\neq 0$.
The proof of this 
result can be found in \cite{flajolet:2009}, Sections VI.5
and VII.2.  In the special case of $w_n=1$ for all $n$ (\ref{eqn:flajolet})
follows easily from (\ref{Catalan}) using Stirling's formula.

We define the probability distribution $\nu_N$ on $\cT _N$ by
\beq\label{p1}
\nu_N(T) = Z_N^{-1}\prod_{v\in T\setminus
r}w_{\sigma_v}\,,\quad T\in\cT_N\,,
\eeq
assuming $Z_N\neq 0$.
Using the correspondence $F_R: {\cal B}_{\frac 1R}(T)\,\to\,{\cal T}^{K}$ described above, where $K=D_R(T)$, the probability $\nu_N({\cal B}_{\frac 1R}(T))$ can be expressed in terms of partition functions, and so the preceding results about the asymptotic behaviour of $Z_N$ can be applied to determine the limiting probability as $N\to\infty$.
More precisely, one obtains (see Appendix A in \cite{Durhuus:2007} for details) 
\beq\label{ballprob}
\lim_{N\to\infty} \nu_N({\cal B}_{\frac 1R}(T)) = D_R(T)\,Z_0^{D_R(T)-1}\zeta_0^{|B_{R-1}(T)|} \prod_{v\in B_{R-1}(T)\setminus r}w_{\sigma_v}.
\eeq
Similarly to the case of random
walk in $\mathbb Z^d$ discussed in Section \ref{sec:prelim}, this equals the volume of ${\cal B}_{\frac 1R}(T)$ with respect to a limit probability measure concentrated on ${\cal T}_\infty$, which we denote by $\nu$.
We call the
ensembles $(\cT_\infty ,\nu)$ defined in this way {\it generic ensembles}, referring back to the genericity assumption (\ref{genass}).
The expectation with respect to $\nu$ will be denoted $\langle\cdot\rangle _{\nu}$.

Note that if $w_n=1$ for all $n$, then $\nu_N$ is the uniform measure on trees of size $N$, i.e. 
\begin{equation}\label{Nuniftree}
 \nu_N(T) = \frac{1}{C_{N-1}},\quad T\in \cT_N\,.
\end{equation}
In this case, $\nu$ is called the Uniform Infinite Planar Tree (UIPT). 

The expression \eqref{ballprob} has a simple formal interpretation which we now explain. Given an infinite tree $T$ a {\it spine} for $T$ is an infinite linear chain (non-backtracking path) in $T$ starting at the root. We claim that $\nu$ is concentrated on the subset $\cal S$ of $\cal T$ consisting of trees with a single spine. Thus, if we denote the vertices on the spine
by $s_0=r, s_1, s_2,\dots$, ordered by increasing distance from the root, the trees in ${\cal S}$ are obtained by attaching branches, i.e. finite trees in $\cT$, to each spine vertex $s_n$ except the root, by identifying their roots
with $s_n$. If $s_n$ is of degree $\sigma$ there are $\sigma-2$
branches attached to the spine at $s_n$.
For $T\in{\cal S}$ we associate a weight $w_{\sigma_v}$ to each vertex of $T$ different from the root and a weight $\zeta_0$ to each edge of $T$. We now argue that on a formal level these assignments characterise $\nu$. Considering the $D_R(T)=K$ vertices of $T$ at height $R\geq 1$, it is clear that one of them, say the $i$'th from the left, must be $s_R$ and for the corresponding $K$-tuple $F_R(T)=(T_1,\dots,T_K)$ this means that  $T_i$ belongs to ${\cal S}$ while $T_j$ is a finite tree in ${\cal T}$ for $j\neq i$.
Using the weight assignments specified above, we obtain the total weight of single spine trees in ${\cal B}_R(T)$ by summing over all possible $K$-tuples $(T_1,\dots,T_K)$. This yields a factor $Z_0$ for each $j\neq i$, while the sum over $T_i$ is to be interpreted as an integral over ${\cal S}$ with respect to $\nu$ which gives $1$. Moreover, summing over the position of $i$ yields the factor $D_R(T)$ while the remaining factors in \eqref{ballprob} arise from the part of $T$ below height $R$, thus providing the desired interpretation of \eqref{ballprob}.  

\begin{figure}[ht!]
\sidecaption[t]
\includegraphics{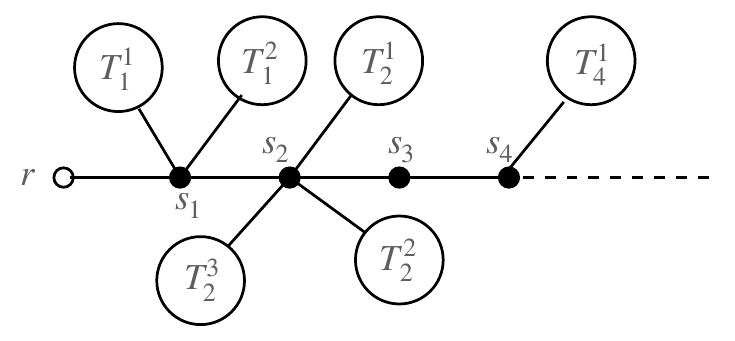}
\caption{The first few vertices $s_i$ on the spine of a tree and the finite branches $T_i^n$
attached to them.}
\label{fig1}
\end{figure}

Using the above description of single spine trees in terms of (finite) branches attached 
at spine vertices (see Fig. \ref{fig1}), we can obtain the probability $P$ 
that the spine vertices $s_1,\dots, s_n$ have given degrees $\sigma_1,\dots,\sigma_n$, respectively, by summing the attributed weights over the branches attached to these vertices as well as the infinite tree spanned by $s_n$ and $s_{n+1}$ and its descendants. Since the $\sigma_i-2$ branches attached to $s_i$ can be divided in $\sigma_i-1$ ways into left and right branches and summation over individual branches yields a factor $Z_0$, we obtain  
$$
P=\prod_{i=1}^n \zeta_0(\sigma_i-1) w_{\sigma_i}Z_0^{\sigma_i-2}\,.
$$
Noting that 
\beq\label{x6}
\sum_{k=2}^\infty \zeta_0(k-1)w_{k}Z_0^{k-2}\; =\; 
\zeta_0\phi'(Z_0)\;=1\,,
\eeq
where the last equality follows from (\ref{fixZ}) and \eqref{eqcrit1}, 
this shows that the degrees of spine vertices are independently and identically distributed with probability 
\beq\label{branchprob}
\varphi(k) = \zeta_0(k-1) w_k\,Z_0^{k-2}
\eeq
for having degree $k$.
Similarly, 
it follows from the interpretation of $\nu$ just given, that the individual branches $T$ are identically and independently distributed with probability proportional to $\zeta_0^{|T|}\prod_{v\in T\setminus r}w_{\sigma_v}$. The appropriate normalisation factor is $Z_0^{-1}$ yielding the probability distribution 
\beq\label{mucrit}
\mu(T)= Z_0^{-1}\zeta_0^{|T|}\prod_{v\in T\setminus r}w_{\sigma_v}
\eeq
for $T\in\cT$ finite.

Using \eqref{ballprob} and \eqref{branchprob} we can also determine the distribution of the total size of branches at a given spine vertex $s_i$, which will be needed in Section \ref{sec:spectraldim}. Thus, denoting the union of the branches $T_i^1,\dots,T_i^k$ at $s_i$ by $Br_i$, which clearly is a tree, we define  
$$
\bar Z_N = \nu(\{|Br_i| = N\}) = \sum_{k=0}^\infty \sum_{N_1+\dots+N_k=N}\varphi(k)\mu(\{|T_i^1|=N_1\})\cdots\mu(\{|T_i^k|=N_k\}),
$$
where the last equality follows from the independence of the distributions of $T_i^1,\dots,T_i^k$ and $\varphi(k)$ is given by \eqref{branchprob}. Using also \eqref{mucrit},
 the corresponding  generating function is given by
$$
\bar Z(s) = \sum_{N=0}^\infty \bar Z_N s^N = \sum_{k=0}^\infty \zeta_0(k+1)w_{k+2}Z(s\zeta_0)^k = \zeta_0\phi^\prime(Z(s\zeta_0))\,.
$$
As before, we may use the analyticity properties of $\phi$ to Taylor expand the last expression around $s=1$ and obtain 
\begin{equation}\label{bar-sing}
\bar Z(s) = 1- \bar c_0\sqrt{1-s} +O(|1-s|)
\end{equation}
for $s$ in a neighborhood of the unit circle, where  $\bar c_0>0$ is a constant.
Applying transfer theorems as before, see \cite{flajolet:2009}, 
this implies the asymptotic behaviour 
$$
\bar Z_N = \mbox{const.}\, N^{-3/2}\left(1+O(\frac 1N)\right)\,.
$$ 
for $N$ large, which will be used in Section \ref{sec:spectraldim}. 

\subsection{Trees and branching processes}
We will now show how
the probabilities  $\mu (T)$ defined in 
(\ref{mucrit})
arise from a {\it branching process}.
A Galton-Watson (GW) process is 
specified by a sequence $p_n,\,n= 0,1,2,\dots$, of non-negative numbers which
are called {\it offspring
probabilities} and satisfy 
\beq\label{prob1}
\sum_{n=0}^\infty p_n = 1\,.
\eeq
The number $p_n$ can be viewed as the probability of having $n$ offspring.  The process
begins with one individual who has $n$ offspring with probability $p_n$.   Each of the offspring
has $n$ descendants with the same probability distribution and the process continues in the same way
generation after generation.
Clearly it can stop after a finite number of steps or go on for ever.    The motivation of 
Galton and Watson was to find out how likely it was that families would die out.   In order to have a 
one to one correspondence between
trees generated by a GW process and the tree ensembles we have been discussing we have to 
assume that the first generation in the process has only one member since the root vertex has degree 1.

We say that the process is {\it critical} if the mean number of offspring is 
1, i.e.,         
\beq\label{GWcrit}         
\sum_{n=1}^\infty np_n = 1\,.         
\eeq 
A critical GW process gives rise to a 
probability distribution $\pi$ on the subset
of finite trees $T$ in ${\cal T}$ given by 
\beq\label{GWprob}
\pi (T) = \prod_{i\in T\setminus r}p_{\sigma_i-1}
\eeq
as a consequence of \eqref{x4} below.
If we take
\beq\label{GWprobw}
p_n = \zeta _0 w_{n+1}Z_0^{n-1},
\eeq  
where $w_n$, $\zeta_0$ and $Z_0$ correspond to a generic tree as described above, then
\beq\label{x3}
\sum_{n=0}^\infty p_n = \zeta _0 \sum_{n=0}^\infty w_{n+1}Z_0^{n-1}\; =\; 
\zeta _0Z_0^{-1}\phi(Z_0)\;=1\,,
\eeq
where the last equality follows from (\ref{fixZ}).  Furthermore, by (\ref{GWprob}) we have 
\beq\label{x4}
\pi (T)\; = \;\zeta _0^{|T|}\prod_{i\in T\setminus r}
w_{\sigma_i}Z_0^{\sigma_i-2}
\;= \;Z_0^{-1} \zeta _0^{|T|}\prod_{i\in T\setminus r}w_{\sigma_i} \;=\;\mu (T),
\eeq
since
\beq\label{x5}
\sum_{i\in T\setminus r}(\sigma_i-2)\;=\; -1
\eeq
for a tree $T$ with a root of degree $1$.  The reader may also easily verify 
that \eqref{x6} is equivalent to \eqref{GWcrit},
so the GW process defined by (\ref{GWprobw}) is critical.
Note that for the uniform tree we have 
\beq\label{uniform}
p_n=2^{-n-1}.
\eeq

In the following we let $f$ denote the generating function for the
offspring probabilities given by (\ref{GWprobw}),
\beq\label{genf}
f(s)= \sum_{n=0}^\infty p_ns^n \; =\; \zeta_0\sum_{n=1}^\infty
w_nZ_0^{n-2}s^{n-1}\; =\; \zeta_0 Z_0^{-1}\phi (Z_0s)\,.
\eeq
Then equations\ (\ref{x3}) and (\ref{x6}) can be rewritten as 
\beq\label{fcrit}
f(1) = 1\quad\text{and}\quad f'(1)=1\,.
\eeq
 Moreover, the genericity assumption (\ref{genass}) is
equivalent to assuming $f$ to be analytic in a neighbourhood of
the closed unit disk.

If $T$ is a finite tree, let $h(T)$ denote its {\it height},
i.e.\ the maximal height of vertices in $T$. The set of vertices at height $k$ is called the $k$th generation of $T$ and hence $D_k(T)$ is the size of the $k$th generation.
Clearly, $D_1=1$ and $P (\{D_2=n\})=p_n$ where $P (A)$ is the probability of the event $A$.
 Let $f_n(s)$ be the generating function for $D_n$, i.e.,
\beq\label{Genn}
f_n(s)= \sum_{k=0}^\infty P (\{D_n=k\})s^k.
\eeq
Then of course $f_2(s)=f(s)$.  If we assume that $D_n=k$ then the probability that $D_{n+1}=q$ is given by
\beq\label{ProbGn}
P (D_{n+1}=q \,|\, D_n=k)= \sum_{n_1+n_2+\ldots +n_k=q}p_{n_1}p_{n_2}\ldots p_{n_k}\,,
\eeq
so the generating function for $D_{n+1}$ is $f_{n+1}(s)=f(f_{n}(s))$.
By induction it follows that
$f_{n+2}$ is the $n$th iterate of $f$.

Clearly, the average value of $D_n$ with respect to $\mu$ equals $f'_n(1)$.  By \eqref{fcrit} it follows that
$\br D_n\kt_\mu =1$ for all $n$. As a consequence we get that
\beq \label{ballexp}
\langle |B_R|\rangle_\mu = \sum_{n=1}^R\langle D_n\rangle_\mu = R\;.
\eeq

The probability that the GW process dies out, i.e., the tree has finite height, is
given by 
\beq\label{extinction}
P (D_n=0 ~\mbox{\rm for some} ~n) =\lim_{n\to\infty}P (D_n=0)=\lim_{n\to\infty} f_n(0)
\eeq
since $P (D_{n+1}=0 \,|\, D_n=0)=1$.  Since $f(0)<1$ it is easy to see by induction that 
$f_n(0)<1$ for all $n$.   Furthermore, $f'(s)<1$ for $0\leq s<1$ so $f(s)>s$ for $0\leq s<1$.  It follows that
$f_n(0)$ is increasing in $n$ so the limit $\lim_{n\to\infty}f_n(0)=\lambda$ exists.  Clearly 
$f(\lambda )=\lambda$ and we conclude that $\lambda =1$, so the tree has a finite height with probability 1.

Working slightly harder one can show that
\beq\label{kolmogorov}
P (D_n>0)= \frac{2}{nf''(1)} + O(n^{-2}),
\eeq
if $f''(1)$ is finite.
This means that if $\mu$ is the measure on finite trees 
given by (\ref{mucrit}) then 
\beq \label{heightdistr}
\mu(\{T\in \cT :\; h(T)\geq R\}) = \frac{2}{f''(1)R} + O(R^{-2})
\eeq 
for $R$ large. The proof of (\ref{kolmogorov}) can be found, e.g., in
\cite{Harris}.   

In the special case $p_n=bc^{n-1}$ for $n\geq 1$ and
$p_0=1-\sum_{n\geq 1}p_n$,
the proof is simple since $b=(1-c)^2$ and
\beq\label{Explf}
f(s)=\frac{c+(1-2c)s}{ 1-cs}.
\eeq
The iterates of $f$ can be calculated explicitly:
\beq\label{closedform}
f_{n+1}(s)=\frac{(n+1)c-(nc+2c-1)s}{ 1+nc-(n+1)cs}
\eeq
and
\beq\label{kol2}
1-f_{n+1}(0)=\frac{1-c}{ 1+nc}.
\eeq

\section{Causal triangulations}\label{sec:bijection}

\subsection{Definition}
\label{subsec:CDT}

Let $G$ be a finite rooted planar triangulation with the topology of a disk, i.e.\ a finite planar graph with a root $S_0$ such that all the faces are triangles, except one, called the \emph{exterior face}, whose complement is a closed disk. We say that $G$ is a \emph{causal triangulation} (CT) if the vertices at distance $k$ from $S_0$ span a cycle, i.e. the edges of $S_k(G)$ form a cycle and there are no isolated vertices in $S_k(G)$, for  $0<k<h(G)$, where 
\begin{equation}\label{def:dh}
h(G) = \underset{v\in V(G)}{\mbox{\rm max}} d_G(S_0,v)
\end{equation}
is called the \emph{height} or \emph{radius} of $G$. Thus, for $0<k<h(G)$, each vertex $v$ in $S_k(G)$ has two neighbours in $S_k(G)$ and a number $\sigma_{fv}\geq 1$ of \emph{forward neighbours} in $S_{k+1}(G)$ as well as a number $\sigma_{bv}\geq 1$ of \emph{of backwards neighbours} in $S_{k-1}(G)$ such that 
\beq 
\sigma_v= \sigma_{fv} + \sigma_{bv} + 2\,.
\eeq 
We call $\sigma_{fv}$ the \emph{forward degree} of $v$ and $\sigma_{bv}$ the \emph{backward degree} of $v$, and we say $v$ has height $k$ in $G$ if $v\in S_k(G)$. By convention, we shall assume that each vertex at height $h(G)$ is contained in a single triangle, thus having degree $2$, and implying that boundary vertices of the disk alternate in height between $h(G)$ and $h(G)-1$, see Fig. \ref{fig:0}.  This boundary condition is not essential to the definition of CTs but it is convenient when we come to defining the probability distributions on finite CTs below. 

The preceding definition of finite CTs extends in a straightforward way to the case of infinite CTs, in which case $h(G)=\infty$ and no boundary is present. It is clear that any infinite CT can be drawn in such a way that it covers the whole plane, which we will generally assume in the following. Likewise, the definition above can easily be adapted to CTs with the topology of a cylinder that will be of interest in Section \ref{sec:gce}. 

We denote by $\Cfinite$ the collection of all finite causal triangulations of the disk; by $\C C_\infty$ the set of infinite triangulations of the plane; and by $\C C$ their union. 
 Moreover, let ${\C C}^{(h)}$ be the set consisting of CTs of height $h$. For technical
reasons that will become clear below we will always assume that one of the
edges emerging from the \emph{central vertex} $S_0$ is marked and called the
\emph{root edge}. In particular, this eliminates accidental symmetries under rotations
around the root vertex. An example of $G\in{\C C}^4$ is shown in  Fig. \ref{fig:0}. 
 
 We will consider two different types of ensembles of causal triangulations. In the next section the  \emph{grand canonical ensemble} based on finite CTs will be defined and associated correlation functions calculated. In the present section our main focus is on infinite CTs, making use of the results about ensembles of infinite trees in the previous section via a bijection between planar trees and CTs that we now  describe. 
 
\begin{figure}
\sidecaption[t]
\includegraphics[scale=0.4,angle=-00]{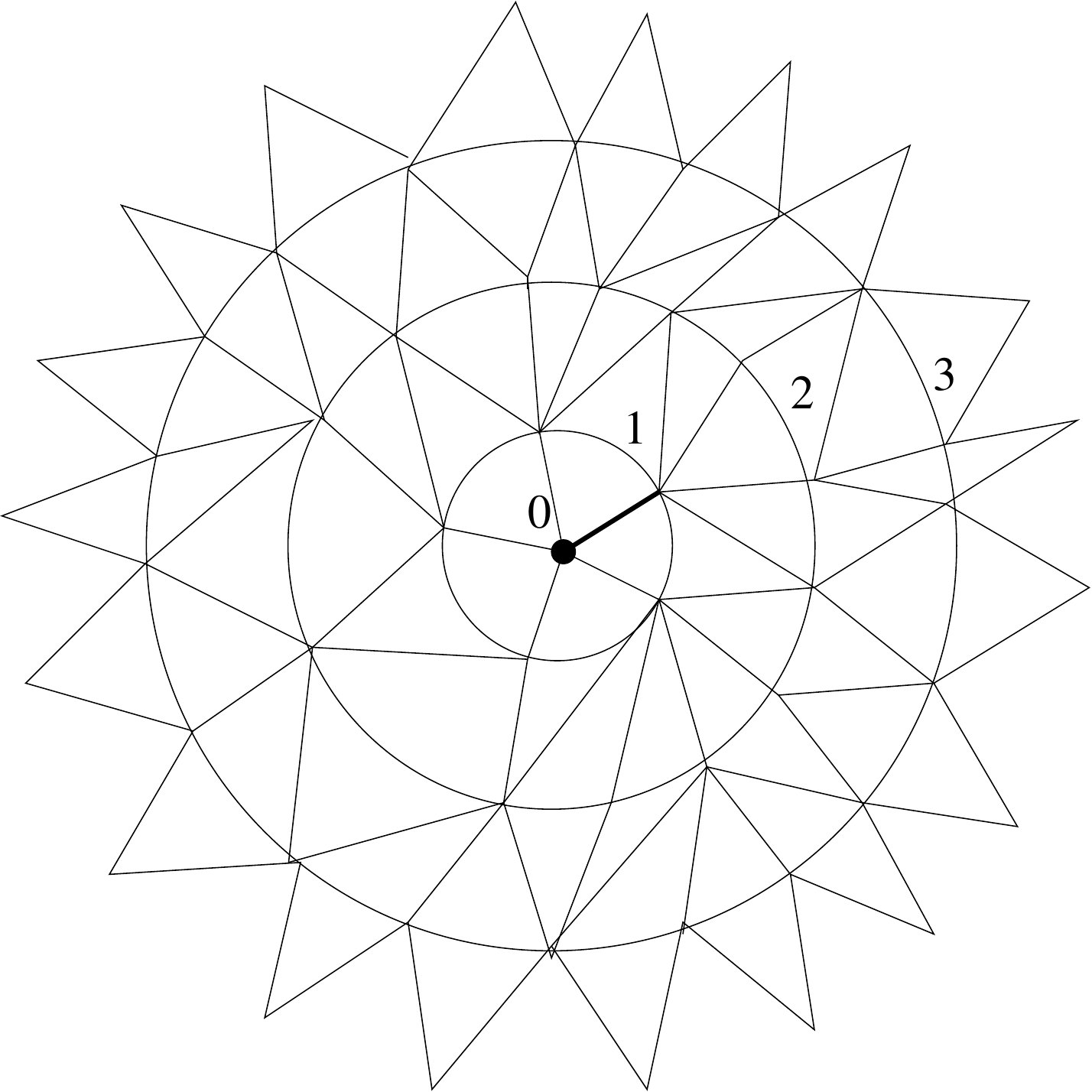}
\caption{Example of $G\in{\C C^{(4)}}$; the numerical labels show the heights of the cycles and the root and marked edge are shown in bold.}

\label{fig:0}       % Give a unique label
\end{figure}

\subsection{Bijection between CTs and planar trees}
\label{subsec:bijection}

Given a causal triangulation $G$ and $k<h(G)-1$ we will let $\Sigma_k(G)$ denote the
 subgraph of $G$ spanned by  $S_k(G)$  and $S_{k+1}(G)$, i.e. it consists of the vertices in $S_k(G)$ and $S_{k+1}(G)$ together with the edges joining them. 
 Note that $\Sigma_k(G)$ is a triangulation of an annulus for $k>0$. Denoting by $\Delta(H)$ the number of triangles in a planar graph $H$, we have that 
 \beq \label{sigma} \Delta(\Sigma_k) =\abs{S_k}+\abs{S_{k+1}}\eeq
 and hence, due to the chosen boundary condition, we have for $G\in\Cfinite$ that 
\beq\label{area}
\Delta(G) = 2\,\sum_{k=1}^{h(G)-1}\abs{S_k(G)}\;.
\eeq
In particular, it follows that $\Delta(G)$, which will be called the \emph{area} of $G$, is even. We shall denote by $\C C_N$ the subset of $\Cfinite$ consisting of CTs of area $2N$ for $N\geq 1$. 
 
Let $G\in\Cfinite$. We define a planar rooted tree $T=\beta(G)$ from $G$ in the following way:
\begin{enumerate}
\item The vertices of $T$ are those of $G$ whose  height is at most $h(G)-1$ 
  together with a new vertex $r$ which is the root of $T$ and whose
  only neighbour is $S_0$ and which is placed in the triangle incident on the marked edge on the right as seen from $S_0$.  
 \item All edges in the cycles $S_k(G),\, k=1,2,\dots,h(G)-1$, as well as those containing a vertex at maximal height are deleted, while all edges from $S_0$ to $S_1$ belong to $T$.
\item For each $2\leq k < h(G)-1$ and each vertex $v\in S_k(G)$ the rightmost of the $\sigma_{fv}$ forward pointing edges as seen from $v$ is deleted.
\end{enumerate}
Fig.\ref{fig:CTdef} shows an example of the application of these rules. 
Note that if the height of a vertex in $G$ is $k$ then its height in
$\beta(G)$ is $k+1$, i.e. the vertices in $ D_{k+1}(T)$ coincide with those of $S_k(G)$, $ 0\leq k <h(G)-1$.

Conversely, let $T$ be a rooted planar tree. Then the inverse image
$G=\beta^{-1}(T)$ is obtained as follows: 
\begin{enumerate}
\item Mark the rightmost edge connecting $S_1(T)$ and $S_2(T)$. Delete the root of $T$ and the edge joining it to $S_1(T)$. The
  remaining vertices and edges of $T$  all belong to $G$ and $S_1(T)$
  becomes $S_{0}$, the root of $G$.
\item For $2\leq k \leq h(T)$ insert edges joining vertices in
  $S_{k}(T)$ in the cyclic order determined by the planarity of $T$; this creates the sub-graphs  $S_{k-1}(G)$.
  \footnote{Note that by this convention we allow certain degenerate causal triangulations with cycles $S_k$ %spatial slices 
  having 
  one or two edges corresponding to trees with one or two vertices at a given height.}
\item For every vertex $v\in S_k(T)$, $ 2\leq k\leq h(T)-1$, in $T$ draw an edge from $v$ to a vertex in $S_{k+1}(T)$
 such that the new edge is the rightmost as seen from $v$ to
 $S_{k+1}(T)$ and does not cross any existing edges.
\item Decorate the edges of the cycle $S_{h(T)-1}(G)$ with triangles.
\end{enumerate}

A mapping equivalent to $\beta$ is described in \cite{Krikun:2008}. For $G\in\Cfinite$ these mappings are variants of Schaeffer's bijection
\cite{schaeffer:1998}. Indeed, deleting the edges in $S_k(G)$ for all $k$ and identifying the vertices of maximal height $h(G)$ one
obtains a quadrangulation to which Schaeffer's bijection can be
applied; here the labelling of the vertices equals the height
function. It is clear, that the bijection just described extends to the case of infinite CTs and planar trees, simply by ignoring the points pertaining to the chosen boundary condition for CTs. For an extension to more general
planar quadrangulations see \cite{chassaing:2006}.

\begin{figure}
\sidecaption[t]
  \includegraphics[scale=0.4]{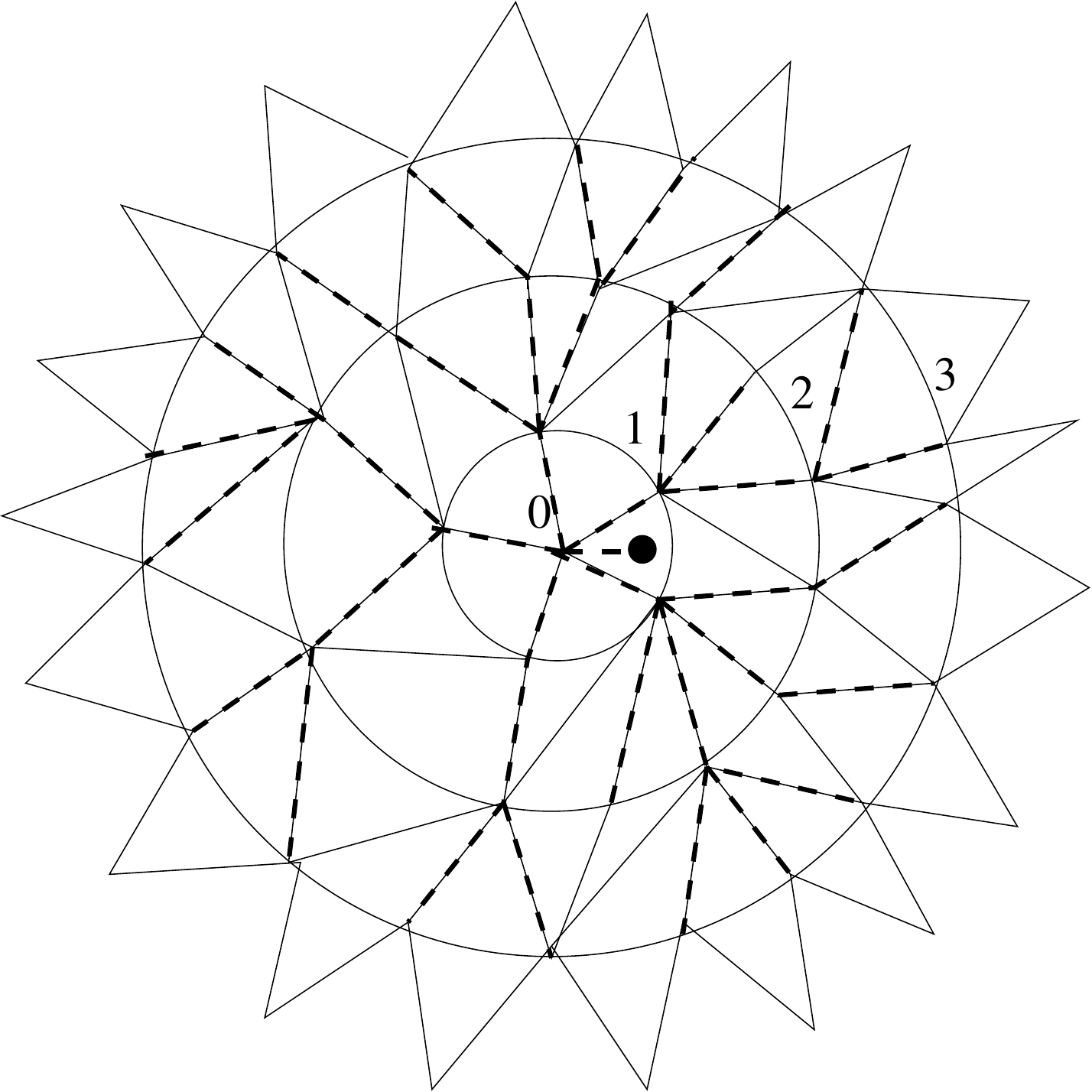}
\caption{The bijection from $G\in{\C C}$ to $T\in{\C T}$: this example shows the tree equivalent to the triangulation of Fig.\ref{fig:0}. The dashed lines show the edges of the tree, including the new edge $(r,S_0)$.}
\label{fig:CTdef}       % Give a unique label
\end{figure}

Using \eqref{area}, this construction of $\beta$ shows that it maps ${\cal C}_N$ bijectively onto ${\cal T}_{N+1}$ and likewise $\cal C_\infty$ onto $\cal T_\infty$.
Moreover, defining the metric $d_{\cal C}$ on $\cal C$ by
\beq
d_{\cal C}(G,G') = \inf\{\frac{1}{R+1}\,:\; B_R(G)=B_R(G')\}\,,
\eeq
the map  is an isometry.

Now define the uniform finite volume probability distributions $\rho_N$ by 
\beq\label{tildemuN}
\rho_N(G) =\frac{1}{\sharp\, {\C C}_N} = \frac{1}{\sharp\, {\C T}_{N+1}}\,,\quad G\in {\mathcal{C} }_N\,.
\eeq
 Thus, we have that $\rho_N$ is related to the uniform tree $\nu_N$ (see  \eqref{Nuniftree}) by 
\beq \label{equivN}
\rho_N(G) = \nu_N(\beta(G))\,,\quad G\in {\cal C}_N\,.
\eeq
It follows immediately from the existence of the UIPT $\nu$ discussed in Section \ref{sec:gentree} that the limit $\rho=\lim_{N\to\infty}\rho_N$ exists and is a probability
measure on ${\cal C}_\infty$ given by
\beq \label{equivinfty}
\rho(A) = \nu(\beta(A))
\eeq
for any event $A\subseteq {\cal C}_\infty$.

We call the ensemble $({\C C_\infty},\rho)$ the \emph{Uniform Infinite
  Causal Triangulation} (UICT).
As noted in Section \ref{sec:tree} the measure $\nu$ is concentrated on the set $\C S$ of single spine trees. Hence, $\rho$ is concentrated on the
subset $\beta^{-1}({\cal S})$. 

A result analogous to the above has been obtained for general planar triangulations in \cite{angel:2003aa}. Finally, we observe that the relationship
between trees and CTs described here is not the same as that introduced in \cite{DiFrancesco:1999em} 
where the trees do not in general belong to a generic random tree ensemble.

\section{Grand canonical ensemble and the scaling limit \label{sec:gce}}

\subsection{Disk and annulus partition functions }\label{subsec:partitionfn}

The grand canonical ensemble for finite CTs %, usually called the Causal Dynamical Triangulation model (CDT), 
was introduced in \cite{Ambjorn:1998xu}.
The disk partition function  $W_M$ for CTs of a fixed height $h$ is defined by assigning each triangle in $G\in{\C C}^{(h)}$ (see Fig. \ref{fig:0}) a weight $g$, and  each boundary triangle an additional weight factor $yg^{-1}$; this gives

\begin{align}
    \label{eqn:Wdefn}
    W_M(g,z;h)&=\sum_{G\in {\C C}^{(h)}} \abs{S_{h-1}(G)}\, (z/g)^{\abs{S_{h-1}(G)}}\,  g^{\Delta(G)}\nonumber\\
    &=\lderiv{z} \sum_{G\in {\C C}^{(h)}} %\abs{S_{h-1}(G)}\, 
    (z/g)^{\abs{S_{h-1}(G)}}\,  g^{\Delta(G)}.
\end{align}
Here the subscript $M$ indicates that the disk boundary is marked (recall that there is also a marked root edge), which generates the factor $\abs{S_{h-1}(G)}$ in the weight of $G$.
$W_M$  can be thought of as the discretized path integral for the amplitude that a Euclidean universe with disk topology starts at a point $S_0$ at Euclidean time $0$, and has a single connected boundary at Euclidean time $h$. Then $\log g$ is the bulk cosmological constant coupled to $\Delta(G)$ which is the space time volume, and $\log y$ is the boundary cosmological constant coupled to the boundary length given by the number of boundary triangles, $\abs{S_{h-1}(G)}$. %$\abs{S_{h-1}(G)}$. 

Correspondingly, the annulus (or cylinder) amplitude describes a Euclidean universe that evolves in Euclidean time $h$ from an entrance boundary to an exit boundary.  The contributing graphs are created from the disk graphs by inserting a second boundary at height 0; %using the puncture operation $\mathcal P$; 
starting with $G\in{\C C}^{(h)}$ separate the triangles in $\Sigma_0$ so that they no longer have edges in common, but still have an edge in $S_1$. 
The resulting entrance boundary contains $\abs{S_1}$ triangles. Each is assigned an extra weight factor $xg^{-1}$, and one, defined to be the triangle immediately clockwise of the marked edge in $G$, is marked. 
The annulus partition function with 
one marked triangle on the exit boundary is then
\begin{align}\label{eqn:Gdefn}
    W_{MM}(g,x,y;h)&=\sum_{G\in {\C C}^{(h)}} %\abs{S_1(G)}\,
    (x/g)^{\abs{S_1(G)}}\, \abs{S_{h-1}(G)}\, (y/g)^{\abs{S_{h-1}(G)}}\,  g^{\Delta(G)}\, .
    \end{align}

The partition functions are computed using the
bijective map $\beta: {\C C}^{(h)}\rightarrow {\C T}^{(h)}$ rather easily.  
Let $w_h(g,z)$ be the partition function for trees of height $\le h$, with each vertex $v$ assigned a weight  $g^{2(\sigma_v-1)}$, and each vertex at height $h$ assigned a further weight $zg^{-1}$, then
\begin{align}
    w_h(g,z)=\sum_{h'\le h}\,\sum_{T\in\C T^{(h')}}(z/g)^{\absd{S_{h}(T)}} \left(\prod_{v\in T \backslash r}
    g^{2(\sigma_v-1)}\right).\label{eqn:wdefn}
\end{align}
Each vertex in $S_{i+1}(T)$ has exactly one edge connecting it to a vertex in $S_{i}(T)$ for $i=1\ldots h(T)-1$. So every vertex $v\in T\backslash r$ contributes $\sigma_v-1$ vertices in $G\backslash S_{h(G)}(G)$, %$S_1(G)\ldots S_{h(G)-1}(G)$, 
where $G=\beta^{-1}(T)$, and thus, using \eqref{area},% the gives
\begin{align}
    \Delta(G)=2\sum_{v\in T \backslash r}(\sigma_v-1).
\end{align}
Using the map $\beta$ to rewrite the right hand side of \eqref{eqn:wdefn} as a sum over  CTs  gives
\begin{align}
    w_h(g,z)=\sum_{h'\le h}\,\sum_{G\in{\C C}^{(h')}} (z/g)^{\abs{S_{h-1}(G)}} g^{{\Delta(G)}}.\label{eqn:wdefn2}
\end{align}
Only trees $T$ of height $h$ generate $z$-dependent contributions to \eqref{eqn:wdefn}, so differentiating the right hand side of % operator $\lderiv{z}$ applied to
\eqref{eqn:wdefn} w.r.t. $z$ suppresses all except the $h'=h$ terms; hence
\begin{align}
    \lderiv{z}w_h(g,z)=W_M(g,z;h).
\end{align}

\begin{figure}
\sidecaption[t]
% Use the relevant command to insert your figure file.
% For example, with the graphicx package use
%\begin{center}
  \includegraphics[scale=1.0]{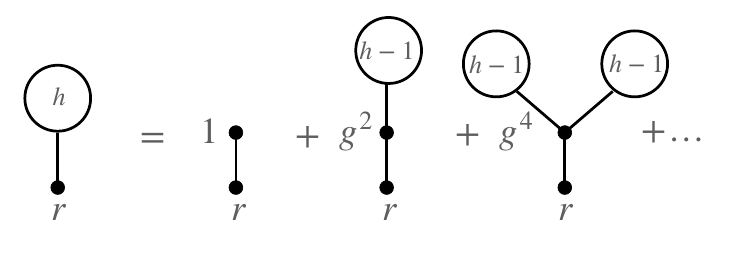}
 % \end{center}
% figure caption is below the figure
\caption{Graphical representation of equation \eqref{eqn:wrecurr}.}
\label{fig:treecutting}       % Give a unique label
\end{figure}

To compute $w_h(g,z)$ we decompose the trees of height $h$ into trees of height $h-1$ by cutting at the vertex adjacent to the root, see  Fig. \ref{fig:treecutting},
which gives 
\begin{eqnarray}\label{eqn:wrecurr}
    w_h(g,z)&=&\sum_{k=0}^\infty g^{2k}(w_{h-1}(g,z))^k  = \frac{1}{1-g^2 w_{h-1}(g,z)},
\end{eqnarray}
with $w_1(g,z)=zg^{-1}$.
This recursion  is easily solved by setting $w_h=u_h/u_{h+1}$ which gives a linear difference equation for $u_h$; imposing the initial condition, and 
choosing the convenient parametrization $g^{-1}=2\cosh\theta$, leads to
\begin{equation}
    w_h(g,z)=2\cosh\theta\,\frac{\sinh (h-1)\theta-z\sinh(h-2)\theta}{\sinh h\theta-z\sinh (h-1)\theta},
\end{equation}
and hence the disk partition function is 
\begin{align}
    W_M(g,z;h)&=2\cosh\theta \,\frac{z\sinh^2\theta}{(\sinh h\theta-z\sinh (h-1)\theta)^2}.\label{eqn:Whresult}
\end{align}
To find the annulus partition function we follow the same steps until the final iteration of the tree recurrence. Here each offspring of the vertex adjacent to the root has a weight $g^2 (x/g)=xg$ instead of $g^2$, so
\begin{align}\label{eqn:Ghresult}
    W_{MM}(g,x,y;h)&=%\lderiv{x}\,
    \lderiv{y}
    \frac{1}{1-x g w_{h-1}(g,y)}\nonumber\\
    &=\frac{xy\sinh^2\theta}{(\sinh (h-1)\theta-(x+y)\sinh (h-2)\theta+xy\sinh(h-3)\theta)^2}.
\end{align}
The partition functions \eqref{eqn:Whresult}, \eqref{eqn:Ghresult} for every $h$ are analytic functions of $g,x$, and $y$ in the region 
\begin{align}
    A: \abs{g}<g_c=\half,\;\abs{x}<1,\;\abs{y}<1.\label{eqn:Adefn}
\end{align}
Note that for a given finite $h$ and $g<\half$ the poles of $W_{MM}$ in $x$ and $y$ lie strictly outside $A$.

Finally, we remark that 
$\half w_n(g=\half,s)$ is the offspring probability generating function  $f_{n-1}(s)$ for the uniform random tree, given by \eqref{closedform} with $c=\half$.

\subsection{Scaling amplitudes\label{subsec:scaling}}
As noted above, the partition functions \eqref{eqn:Whresult}, \eqref{eqn:Ghresult} are analytic in the region $A$. 
Within $A$ the partition functions are dominated by graphs with small area and short boundaries. Approaching the limits of $A$, the area and boundary length(s) of the dominant graphs grow arbitrarily large,  
and the scaling limit can be constructed. 
Expanding \eqref{eqn:Whresult} about $\theta =0$ at fixed $h$ and $y<1$ gives
\begin{align}
    W_M(g,y,h)=2\,\frac{y}{(h (1-y)+y)^2}
   +\C O\left(\theta ^2\right),\label{eqn:Wupper}
\end{align}
which reflects the fact that tall trees are rare even at $g=\half$.
The universe described by $W_M$ does not survive when $h\to \infty$ unless the limit is taken in such a way that $h(1-y)=\mathrm{const}$; only then does the model generate universes which are very large, compared to the discretization scale, in  the Euclidean time direction.  The physically non-trivial limit is obtained by setting $g=\half\mathrm {sech}\,\theta$, $y=1-Y\theta\Lambda^{-\half} $, $h=H\theta^{-1}\Lambda^{\half}$ and taking $\theta\to0$.\footnote{A mathematical treatment of the weak convergence properties of this limit is given in \cite{Sisko:2018wpm}.}
The scaling amplitudes are then defined to be
\begin{align}
   W_M^s(\Lambda,Y,H)&= \lim_{\theta\to 0} W_M(\half\,\mathrm {sech}\,\theta,1-Y\theta\Lambda^{-\half},H\theta^{-1}\Lambda^{\half})\nonumber\\
   &=2\,\frac{\Lambda}{(\Lambda^\half\cosh H\Lambda^\half +Y\sinh H\Lambda^\half)^2},\label{eqn:Ws}
\end{align}
and
\begin{align}
   W_{MM}^s(\Lambda,X,Y,H)&= \lim_{\theta\to 0} \theta^2 W_{MM}(\half\,\mathrm {sech}\,\theta,1-X\theta\Lambda^{-\half},1-Y\theta\Lambda^{-\half},H\theta^{-1}\Lambda^{\half})\hfill\nonumber\\
   &= \frac{\Lambda }{\left((\Lambda+XY)  \sinh \asqrt{\Lambda } H+\asqrt{\Lambda } (X+Y)
   \cosh \asqrt{\Lambda } H 
   \right)^2}. \label{eqn:Gs}
\end{align}
The pre-factor $\theta^2$ in the definition of $W_{MM}^s$ reflects the insertion of an extra marked boundary relative to $W_M^s$ which renders the partition function divergent at the critical point.
 $W_M^s$ is the amplitude for a continuum Euclidean universe with disk topology starting at Euclidean time $0$ and having a boundary at Euclidean time $H$ with bulk cosmological constant $\Lambda$ and  boundary cosmological constant $Y$; similarly $W_{MM}^s$ describes a universe with, in addition, a boundary at time $0$ having boundary cosmological constant $X$. $H$ is chosen to have length dimension $[H]=1$, so $[\Lambda]=-2$, and the extents of the boundaries conjugate to $X$ and $Y$ also have dimension 1. The scaling dimension $d_H$ (sometimes called the scaling Hausdorff dimension) is defined through the dependence of the average area of graphs $\langle \Delta(G)\rangle_{\theta}$  on the height $h(\theta)=H\theta^{-1}\Lambda^{\half}$ in the scaling limit of the disk ensemble  
 \begin{align}
        d_H&= \lim_{\theta\to 0}\frac{\log \langle \Delta(G)\rangle_{\theta}}{\log h(\theta)},
 \end{align}
 where
 \begin{align}
      \langle \Delta(G)\rangle_{\theta}=\left. \lderiv{g}\log W_M(g,y(\theta);h(\theta))\right\vert_{g=\half\mathrm {sech}\,\theta,\,y(\theta)=1-Y\theta\Lambda^{-\half}}.
 \end{align}
 This gives $d_H=2$ which is consistent with the dimension of the spatial and temporal extents each being 1, and  the universes described by the scaling limit being colloquially two-dimensional.  See \cite{Zohren:2008vqi} for further discussion  of these partition functions.

\section{Hausdorff dimension}\label{sec:hausdorff}

In the previous section we introduced the dimension $d_H$ which relates the total area and the linear extent in the limit when both become large.
   This section takes another point of view; we consider infinite graphs 
and the relation between the size of a ball and its radius  as the latter becomes large.

The Hausdorff dimension $d_h$ (sometimes called the local Hausdorff dimension) of an infinite rooted graph $G$ is defined by the 
relation
\beq\label{Hausdorffdim}
|B_R(G)|\sim R^{d_h},~~R\to\infty,
\eeq
where $B_R$ as usual denotes the ball of radius $R$ around the root and $|B_R|$ its size.
More precisely, we define 
\beq\label{Hausdorffdim2}
d_h=\lim_{R\to\infty}\frac {\ln |B_R(G)|}{ \ln R}
\eeq
whenever the limit exists. For the ensembles of trees and surfaces that are studied
here we show that this is indeed the case and yields the same value of $d_h$ for almost all $G$. We shall likewise see that the same value of $d_h$ is obtained by replacing $|B_R(G)|$ in \eqref{Hausdorffdim2} by its average value, which is in general easier to evaluate or estimate. 

  In many important cases $d_h=d_H$; this includes the ensembles studied in this paper, but the relation does not hold universally as will be seen in Section \ref{sec:matterfields}, albeit in a case where the ensemble weights may take negative values.

\subsection{Generic trees}

Let $T$ be a generic tree with associated probability distribution $\nu$.  We can assume, as has been explained
in Section \ref{sec:tree},
that $T$ has a unique spine.  Let $T_i^n$, $n=0,1,\ldots \sigma_{s_i}-2$ be the 
finite trees attached to the $i$th vertex on the spine, see Fig. \ref{fig1}, and  recall that these are independent and each is distributed according to the probability measure $\mu$ given by \eqref{mucrit}. With notation as in Section \ref{sec:gentree} we have

\beq \label{union}
Br_i=\bigcup_{n=1}^{\sigma_{s_i}-2} T_i^n\,,  
\eeq
interpreted as the empty graph if $\sigma_{s_i} =2$. Letting $Y_{j}$ denote the number of vertices different from $s_i$ in $Br_i$ located at distance 
$\leq j$ from $s_i$ we can write
\beq\label{volball}
|B_R|=R+\sum_{i=1}^R Y_{R-i}\,,
\eeq
where the $R$ on the right hand side
accounts for the number edges on the spine inside $B_R$.  It follows from 
(\ref{branchprob}) and (\ref{GWprobw}) that
\beq\label{spineprob}
\nu(\{\sigma_{s_i}=n+2\})=(n+1)p_{n+1}.
\eeq
When multiplied by $n$ this is the $(n-1)$th Taylor coefficient of $f''$. 
Using (\ref{spineprob}) and (\ref{ballexp}) this gives       
\beq\label{expY}
\br Y_{R-i}\kt_\nu =f''(1)(R-i).
\eeq
Summing over $i$ from 1 to $R$ yields
\beq\label{BRExp}
\br |B_R|\kt_\nu =\oh f''(1)R(R-1)+R,
\eeq
which shows that in terms of average values of ball sizes we have $d_h=2$. 

To obtain bounds on $|B_R(T)|$ for individual trees is more cumbersome and we shall not elaborate in detail on this issue here. In section \ref{sec:5.3}
we show that if $G_R=\cup_{i=1}^R Br_i$ then 
$$
|G_R|\,\leq \, C_2 R^2(\ln R)^3
$$
holds for $R$ large enough almost surely with respect to $\nu$ (see \eqref{boundG0K}). Since we clearly have $B_R\subseteq G_R$ the same bound holds for $|B_R|$.
A similar lower bound is shown in \cite{Durhuus:2009sm} yielding
the a.s.\ bounds
\beq\label{BoundsT}
C_1(\ln R)^{-2}R^2\leq |B_R(T)|\leq C_2R^2(\ln R)^3,
\eeq
where $C_1$ and $C_2$ are positive constants.
Evidently these bounds imply that $d_h =2$ a.s.

It is worth remarking that the ensemble average of the volume of a ball 
$B_R(v)$ centered at a random vertex $v$ within some fixed distance from the root displays the same behaviour as in \eqref{BoundsT} as a simple consequence of the triangle inequality.

\subsection{Causal triangulations}

We now turn to the Hausdorff dimension of causal triangulations.  For an infinite  
causal triangulation $G$ we have 
\beq\label{a1}
|B_R(G)|=2\sum_{i=1}^{R}|S_i|+ \sum_{i=1}^{R-1}|S_i|\,,
\eeq
and it follows that
\beq\label {a2}
\| B_R\|\leq |B_R|\leq 3\| B_R\| .
\eeq
Clearly $\|B_R\|+1$ equals the number
of vertices within distance $R$ from the root of the tree $T$ corresponding to
$G$ under the bijection $\beta$.  Hence, in view of (\ref{BRExp}),
\beq\label{volx}
\br | B_R|\kt_\rho \sim R^2,~~ R\to\infty ,
\eeq
where the expectation is with respect to the measure $\rho$
defined in \eqref{equivinfty}, so $d_h=2$ for CTs.
By the same argument we likewise have a.s. that 
\beq\label{BoundsCT}
C_1^\prime(\ln R)^{-2}R^2\leq |B_R(G)|\leq C_2^\prime R^2(\ln R)^3,
\eeq
and hence that $d_h=2$ a.s. with respect to $\rho$.

\section{Spectral dimension}\label{sec:spectraldim}

In this section we define a notion of dimension for graphs which is
different from the ones  discussed above.
This is the spectral dimension which is a measure of how likely it is that
a random walker returns to the starting point.   In the following subsections
we analyse the relation between the spectral and Hausdorff dimensions and
calculate the spectral dimension for generic trees and causal triangulations.

\subsection{Definition of spectral dimension of recurrent graphs}\label{sec:5.1}

Given a graph $G$, we use the notation $\omega:v\to u$ to indicate a path $\omega$ from vertex $v$ 
to vertex $u$ and, if $\omega$ has length $|\omega|=m$, the vertices of $\omega$ will be denoted by 
$v = \omega_0, \omega_1,\dots,\omega_{m-1},\omega_{m} =u$. If $v\neq u$ we write $\omega:v\twoheadrightarrow u$ 
for a path from $v$ to $u$ that does not return to $v$, i.e. $\omega_i\neq v$ for $i\neq 0$. If $v=u$, 
the notation $\omega:v\twoheadrightarrow v$ is used for a path from $v$ to $v$ that does not return to $v$ in between, i.e. $\omega_i\neq v$ for $i=1,2,\dots,|\omega|-1$.  Below we also consider infinite paths 
$\omega = (\omega_0\omega_1),(\omega_1\omega_2),(\omega_{2}\omega_{3}),\dots$ emerging from a vertex 
$v=\omega_0$.  

We define the function $p_G$ on the set of all finite paths on $G$ by 
$$
p_G(\omega) = \prod_{i=0}^{|\omega|-1} \sigma_{\omega_i}^{-1}\,.
$$
It is easily seen that $p_G$ defines a probability distribution on the set 
$\Pi_m(v)$ of paths of fixed length $m$ and fixed initial vertex $v$ and that these distributions are 
compatible in the  way described in Section\ref{sec:prelim}.   We define a probability distribution $P_{G,v}$ on the set 
$\Pi_\infty(v)$ of all infinite paths starting at $v$ by setting 
$$
P_{G,v}(A(\bar\omega)) = p_G(\bar\omega)\,,  
$$
where $\bar\omega$ is an arbitrary finite path starting at $v$ and $A(\bar\omega)$ denotes the set of all 
infinite paths that coincide with $\bar\omega$ in the first $|\bar\omega|$ edges. When considered as probability spaces in the way described, the paths in $\Pi_m(v)$ or $\Pi_\infty(v)$ are usually referred to as  \emph{random walks}.\footnote{More commonly, they are called simple random walks, to distinguish them from, e.g., biased random walks. Since we do not  consider different kinds of random walks in this paper we will leave out the adjective "simple".}

The probability for a random walk of length $m$ starting at $v$ 
to end at $u$ is given by
$$
q_G(m;v,u) = \sum_{\omega:v\to u, |\omega|=m} p_G(\omega)\,.
$$
 The corresponding \emph{cumulative probability} is defined as
$$
Q_G(n;v,u) = \sum_{m =0}^n q_G(m;v,u)
$$
for any $n=0,1,2,\dots$, with the convention 
$$
q_G(0;v,u) = \delta_{v,u} = \begin{cases} 1\quad\mbox{if $v=u$}\\ 0\quad\mbox{if $v\neq u$}\,,\end{cases}
$$ 
i.e., we define $p_G(\omega)=1$ for the trivial walk $\omega$ of length $0$ consisting of a 
single vertex. 

With this convention we note for later reference that $Q_G$ fulfills  
\begin{equation}\label{heat}
Q_G(n+1;v,u)=  \sum_{x:(u,x)\in E(G)} \sigma_x^{-1}Q_G(n;v,x) + \delta_{v,u}\,,\quad 
n\geq 0\,,
\end{equation}
where the sum on the right-hand side, as indicated, is over 
the neighbours of $u$. The quantity
$$
Q'_G(n;v,u) = \sigma_u^{-1} Q_G(n;v,u)\,,
$$
which is symmetric in $v$ and $u$, then fulfills the discrete version of the diffusion equation with a 
source at vertex $v$:
\begin{equation}\label{diffusioneq}
\partial_n Q'_G(n;v,u) = - \triangle^G_u Q'_G(n;v,u) + \sigma_v^{-1}\delta_{v,u}\,,\quad n\geq 0\,, 
\end{equation}
as is easily seen by subtracting $Q_G(n;v,u)$ from  
both sides of equation \eqref{heat}.
Here $\partial_n$ denotes the difference operator with respect to "time" $n$ and $\triangle^G$ is the 
graph Laplace operator acting on functions $f:V(G)\to \mathbb C$ according to
$$
\triangle^Gf(v) = \sigma_v^{-1}\sum_{x:(v,x)\in E(G)} (f(v)-f(x))\,. 
$$

\smallskip

The \emph{spectral dimension} of a connected graph $G$ is most commonly defined in terms of the decay rate of 
the return probability $q_G(m;v,v)$ as a function of $m$. More precisely, if 
\beq\label{specdim1}
q_G(m;v,v) \sim m^{-\frac{\alpha}{2}}\quad\mbox{for $m$ large,}
\eeq
we call $\alpha$ the spectral dimension of $G$ and in this case $G$ is called {\it recurrent} if $\alpha\leq 2$, and otherwise it is called {\it transient}.  More generally, noting that $Q_G(n;v,v)$ is always an increasing function of $n$, the limit $Q_G(\infty;v,v) :=\lim_{n\to\infty} Q_G(n;v,v)$ exists, and $G$ is recurrent if 
the limit is $\infty$, otherwise $G$ is transient. Furthermore, we find it most convenient for our purposes to define the spectral dimension in terms of the asymptotic behaviour of $Q_G(n;v,v)$ for large $n$.
 Thus, for a recurrent graph $G$, we set 
\begin{equation}\label{eq:s-dim}
d_s = 2 - 2\lim_{n\to\infty} \frac{\ln Q_G(n;v,v)}{\ln n}\,,
\end{equation}
provided the limit exists (in which case its value is independent of $v$).  The definition (\ref{eq:s-dim}) is equivalent to (\ref{specdim1})
under mild assumptions.

Note that, since  
$1\leq Q_G(n;v,v)\leq n$ we have  $0\leq d_s \leq 2$. Obviously, $Q_G(n;v,v)$ is not a probability, contrary to $q_G(n;v,v)$. On the other hand, letting $q_G^0(m;v,v)$ denote the first return probability after $m$ steps of the walk, i.e.
$$
q_G^0(m;v,v) = \sum_{\displaystyle 
\omega:v \twoheadrightarrow v, |\omega|=m} p_G(\omega)\,,
$$
we have that
$$
Q_G^0(n;v,v) = \sum_{m=2}^n q_G^0(m;v,v)
$$
is the probability that an infinite walk starting at $v$ returns to $v$ after at most $n$ steps and, in particular, 
$$
Q_G^0(\infty;v,v) = \lim_{n\to\infty} Q_G^0(n;v,v)
$$
is the probability that the infinite random walk returns at least once to $v$. Denoting by $\Pi_\infty^m(v)$ the set of walks that return to $v$ at least $m$ times we can decompose each such walk $\omega$ into pieces  $\omega^{(1)},\dots,\omega^{(m)},\bar\omega$ such that $\omega^{(k)}; v\twoheadrightarrow v$ for $k=1,2,\dots, m$ while $\bar\omega\in\Pi_\infty(v)$ is an end piece. Then the set $A(\omega^{(1)},\dots,\omega^{(m)})$ of all paths in $\Pi_\infty(v)$ whose decomposition is of the stated form with fixed $\omega_1,\dots,\omega_m$ and arbitrary $\bar\omega$ has probability 
$$
P_{G,v}(A(\omega^{(1)},\dots,\omega^{(m)})) = P_{G,v}(A(\omega^{(1)}))\cdots P_{G,v}(A(\omega^{(m)}))\,.
$$
By summing over $\omega^{(1)},\dots,\omega^{(m)}$ we obtain that the probability that the random walk returns at least $m$ times to $v$ equals $Q_G^0(\infty;v,v)^m$\,. Letting $m$ tend to infinity we conclude that the probability that the random walk returns infinitely many times to the initial vertex vanishes if and only if  $Q_G^0(\infty;v,v) < 1$ and the relation 
$$
Q_G(\infty;v,v) = \frac{1}{1-Q_G^0(\infty;v,v)}
$$
holds. On the other hand, $G$ is recurrent if and only if $Q_G^0(\infty;v,v)=1$ and in that case the walk returns to the initial vertex infinitely many times with probability $1$. 
 It is well known, and easy to see, that if $G$ is finite then $d_s=0$ while if $G$ is the hypercubic lattice $\mathbb Z^d$ (viewed as a graph in the standard way) it is a classical result of Polya, see, e.g., Ch.2 in  \cite{LyoPer}, 
 that $G$ is recurrent if and only if $d\leq 2$, and in all cases $d_s=d$. In this article we are mainly 
 concerned with recurrent graphs.

\subsection{Relation between $d_s$ and $d_h$}\label{sec:5.2}

 We now give an elementary proof of a well known inequality between the spectral dimension and 
the Hausdorff dimension $d_h$ valid for arbitrary recurrent graphs. 
This inequality has been proven under certain assumptions on 
the behaviour of the volume of balls under scaling in \cite{grigoryan1998random,coulhon}. Related results for Riemannian  
manifolds were obtained earlier under similar assumptions in \cite{grigoryan1992heat}. Here we essentially need no assumptions beyond existence of $d_s$ and $d_h$. 
Specifically, we now show that if $G$ is a connected recurrent graph such that $d_s$ and $d_h$ both exist, then
\begin{equation}\label{dimineq} 
d_s \geq \frac{2d_h}{1+d_h}\,.
\end{equation}

The proof is based on a simple observation whose formulation requires some further notation. Thus, let $G_0$ be a subgraph of a 
graph $G$. The \emph{inner boundary} $\partial_{in}G_0$  of $G_0$ is the subgraph of $G$ spanned by the vertices of $G_0$ having at least one neighbour in $V(G)\setminus V(G_0)$. Similarly, the 
\emph{outer boundary} $\partial_{out}G_0$ is the subgraph of $G$ spanned by the vertices not in $G_0$ having at 
least one neighbour in $G_0$. The closure $\bar G_0$ of $G_0$ is the subgraph spanned by the vertices of 
$G_0$ and those of  $\partial_{out}G_0$. The \emph{out-degree} $\sigma_v^{out}$ of a vertex $v$ in $G_0$ is by 
definition the number of neighbours of $v$ in $G$ that do not belong to $G_0$. In particular, a vertex 
of $G_0$ belongs to $\partial_{in}G_0$ if and only if its out-degree is positive.

Now, let $G$ be a connected graph,  $V_0$ a proper subset of $V(G)$, and denote by $G_0$ the subgraph of $G$ spanned by $V_0$. Then, for arbitrary fixed $v_0\in V_0$, we have 
\begin{equation}\label{lemma1}
\sum_{v\in \partial_{in}G_0}\sum_{\omega: \scriptstyle v_0\to v\atop\scriptstyle\omega\subseteq G_0} p_G(\omega)\sigma_v^{-1}\sigma_v^{out} \leq 1\,,
\end{equation}
with equality holding if $\bar G_0$ is connected and recurrent.

 The inequality \eqref{lemma1} follows by observing that the left-hand side is the probability $q$ with 
respect to $P_{G,v_0}$ that 
a walk starting at $v_0$ leaves $G_0$. In fact, given such a walk $\tilde\omega$ let $v$ denote 
the last vertex in $G_0$ visited by $\tilde\omega$ before it leaves $G_0$ for the first time and let $\omega$ denote the corresponding initial part of $\tilde\omega$ from $v_0$ to $v$ contained in $G_0$. Then $v\in\partial_{in}G_0$ and there are $\sigma_v^{out}$ vertices in $\bar G_0$ that $\tilde\omega$ may hit next with each such possibility contributing a probability $p_G(\omega)\sigma_v^{-1}$ to $q$. This proves \eqref{lemma1}. Clearly, $q$ only depends on $\bar G_0$ and if this graph is connected and recurrent it is well known \cite{Feller} that the probability for a walk to hit any given 
vertex of $\bar G_0$ equals $1$. In particular, since $V_0\neq V(G)$, it follows that $q=1$.

For use in the proof of \eqref{dimineq} we note two useful consequences of \eqref{lemma1}. First, let $G$ be a connected graph and let $v_0$ and  $u_0$ be two different vertices of $G$. Then
\begin{equation}\label{cor1}
\sum_{\omega: \scriptstyle v_0\to v_0 \atop 
\scriptstyle u_0\in\omega} p_G(\omega) \leq \sigma_{v_0}\sigma_{u_0}^{-1} \sum_{\omega:v_0\to u_0} p_G(\omega)\,,
\end{equation}
and equality holds if $G$ is recurrent.

To prove this statement we set $V_0= V(G)\setminus \{u_0\}$ and let $G_0$ be the sub-graph of $G$ spanned by $V_0$. In other 
words, $G_0$ is obtained from $G$ by removing $u_0$ and the edges containing $u_0$, and is frequently denoted 
by $G-u_0$.
Now note that $v_0\in G_0$ and that every walk $\omega:v_0\to v_0$ containing $u_0$ can be decomposed uniquely 
into a walk $\omega':v_0\to u_0$ and 
a walk $\omega'':u_0\to v_0$, such that $\omega''$ does not return to $u_0$. 
Hence, the reverse of $\omega''$ 
is a walk $\omega'''$ in $G_0$ from $v_0$ to some $v\in \partial_{in} G_0$ and one additional step to $u_0$. Since 
$\sigma^{out}_v =1$ for all $v\in\partial_{in} G_0$ in this case, (\ref{lemma1}) gives 
\begin{eqnarray*}
\sum_{\scriptstyle \omega: v_0\to v_0 \atop 
\scriptstyle u_0\in\omega} p_G(\omega) &=&   \sum_{\omega':v_0\to u_0}p_G(\omega')\sigma_{u_0}^{-1} 
\sum_{v\in\partial_{in}G_0}\sum_{\scriptstyle 
\omega''':v_0\to v \atop \scriptstyle \omega'''\subseteq G_0} p_G(\omega''')\sigma_v^{-1}\sigma_{v_0} \\
 &\leq&  \sum_{\omega':v_0\to u_0}\sigma_{v_0}\sigma_{u_0}^{-1} p_G(\omega')\,,   
\end{eqnarray*}
with equality holding if $G$ is recurrent. This proves \eqref{cor1}.

Second, with $G$ and $v_0$ and  $u_0$ as above we have that  

\begin{equation}\label{cor2}
\sum_{\omega: v_0\to v_0 \atop u_0\notin\omega} p_G(\omega) \leq \sigma_{v_0} d_G(v_0,u_0)\,.
\end{equation}

\noindent In order to verify this claim,  let $(v_0v_1),(v_1v_2),\dots,(v_{N-2}v_{N-1}),(v_{N-1}u_0)$ be a path from $v_0$ to $w_0$ of  minimal length $N=d_G(v_0,u_0)$.
Set $v_N=u_0$ and define $k_\omega$, for each walk $\omega:v_0\to v_0$, to be the 
maximal index $k$ such that $v_k\in\omega$. In particular, if $u_0\notin\omega$ then  $k_\omega\leq N-1$ and
$$
v_{k_\omega} \in\omega,\;\; v_{k_\omega+1},\dots, v_N\notin\omega\,.
$$  
Given that $k_\omega=l\geq 1$ there is a unique decomposition of $\omega$  
 into a walk $\omega':v_0\to v_l$ and a walk $\omega'':v_l\to v_0$ such that $\omega''$ does not 
return to $v_l$. As previously, the reverse of $\omega''$ is a walk $\omega'''$ from $v_0$ to a 
neighbour $v$ of $v_l$ avoiding $v_l, v_{l+1},\dots, v_N$, and an additional last step from $v$ to $v_l$. 
Setting  $V_0=V(G)\setminus \{v_{l},\dots,v_N\}$ and  $V_1=V(G)\setminus \{v_{l+1},\dots,v_N\}$ and 
letting $G_0$ and $G_1$ be the subgraphs of $G$ spanned by $V_0$ and $V_1$, respectively, it follows that 
$\omega'\subseteq G_1$ and $\omega'''\subseteq G_0$. Noting that $v\in\partial_{in}G_0$ and that 
$$
p_G(\omega) = p_G(\omega')\sigma_{v_l}^{-1}\sigma_{v_0}p_G(\omega''')\sigma_v^{-1}
$$
we conclude that 
\begin{equation}
\sum_{\scriptstyle \omega: v_0\to v_0 \atop 
\scriptstyle k_\omega=l} p_G(\omega) \leq 
\sigma_{v_0}\left(\sum_{\scriptstyle \omega':v_0\to v_l\atop \scriptstyle \omega'\subseteq G_1}p_G(\omega')\sigma_{v_l}^{-1}\right)
\left(\sum_{v\in\partial_{in} G_0}
\sum_{\scriptstyle \omega''':v_0\to v\atop \scriptstyle \omega'''\subseteq G_0}p_G(\omega''')\sigma_v^{-1}\right)\,.
\end{equation}
Since $v_l\in\partial_{in} G_1$ it follows from \eqref{lemma1} that the expressions in parentheses on the 
right-hand side are bounded by $1$ such that 
\beq\label{x500}
\sum_{\omega: v_0\to v_0 \atop k_\omega=l} p_G(\omega) \leq \sigma_{v_0}\,.
\eeq 
In the case $k_\omega =0$ the walk $\omega''$ is trivial and the reader easily verifies that the above inequality still holds  with the trivial walk  also included on the left-hand side contributing $1$ to the sum. 
Finally, summing on both sides of (\ref{x500}) from $l=0$
to $l=N-1$, the claimed inequality \eqref{cor2} follows. 

We are now ready to give a proof of \eqref{dimineq}. Let $v_0\in G$ be fixed. Since $p_G$ is a probability distribution on walks of length $m$ starting at $v_0$, 
the identity 
\begin{equation}\label{sumrule}
\sum_{v\in G}Q_G(n;v_0,v) = \sum_{v\in G}\sum_{\scriptstyle \omega:v_0\to v\atop \scriptstyle |\omega|\leq n}p_G(\omega) = n+1
\end{equation} 
holds for each $n\geq 0$. Restricting the sum on the left-hand side to vertices in $B_R(G;v_0)$ we obtain an inequality instead, which implies 
 that there exists a vertex $v_{R,n}$ in $B_R(G;v_0)$ such that 
\begin{equation}\label{max}
Q_G(n;v_0,v_{R,n}) \leq \frac{n+1}{\Vert B_R(G;v_0)\Vert}\,,
\end{equation}
where $R$ is an arbitrary positive integer. Writing 
$$
Q_G(n;v_0,v_0) = \sum_{\scriptstyle \omega:v_0\to v_0\atop
\scriptstyle v_{R,n}\in\omega, |\omega|\leq n} p_G(\omega) + 
\sum_{\scriptstyle \omega:v_0\to v_0\atop \scriptstyle v_{R,n}\notin\omega, |\omega|\leq n} p_G(\omega)\,,
$$
it follows from \eqref{cor1} and \eqref{cor2} together with \eqref{max} that
\begin{eqnarray}\label{Qupperbound}
Q_G(n;v_0,v_0) &\leq& \sigma_{v_0}\sigma_{v_{R,n}}^{-1}\sum_{\scriptstyle 
\omega:v_0\to v_{R,n}\atop \scriptstyle |\omega|\leq n} p_G(\omega) 
+ \sigma_{v_0}d_G(v_0,v_{R,n})\nonumber\\
&\leq& \sigma_{v_0} Q_G(n;v_0,v_{R,n}) + \sigma_{v_0}d_G(v_0,v_{R,n}) \nonumber\\ 
&\leq&  \sigma_{v_0}\left(\frac{n+1}{\Vert B_R(G;v_0)\Vert} + R\right)\,.
\end{eqnarray} 
Now, choose $R$ as a function of $n$ such that the two terms in parenthesis are of the same order of magnitude. 
This is obtained for  $R=\lfloor n^{\frac{1}{1+d_h}}\rfloor$, where $\lfloor a \rfloor$ denotes the integer part of the real number $a$. 
In this case, the inequality \eqref{Qupperbound} gives 
$$
\frac{\ln Q_G(n;v_0,v_0)}{\ln n} \leq \frac{1}{1+d_h} + \frac{\ln\sigma_{v_0}+ \ln\left(1+\frac{\displaystyle n+1}
{\displaystyle R\Vert B_R(G;v_0)\Vert}\right)}{\ln n}\,.
$$
Here, the last term tends to zero as $n\to \infty$ by the assumption that $d_h$ exists and hence, by \eqref{eq:s-dim}, we get 
$$
d_s\geq 2-2\frac{1}{1+d_h} = \frac{2d_h}{1+d_h}
$$
as desired.

\subsection{Spectral dimension of generic trees}\label{sec:5.3}

For the generic trees we noted in Section \ref{sec:hausdorff} that $d_h=2$  with probability $1$. Hence, it follows from  \eqref{dimineq} that 
\begin{equation}\label{dsineq}
d_s\geq \frac 43
\end{equation}
with probability $1$, provided the limit \eqref{eq:s-dim} exists. We do not provide detailed arguments for the existence of the limit, but some further comments on this issue can be found at the end of this subsection.  Next, we aim at proving that  equality holds in \eqref{dsineq} and for that we need a suitable lower bound on $Q(n;v_0,v_0)$ supplementing the upper bound  \eqref{Qupperbound}. 

In \cite{Durhuus:2007} a rather special type of lower bound on generating functions for return 
probabilities of random walk on generic trees was proven. Here, we establish a natural generalization of that bound applicable to the cumulated probabilities $Q_G(n;v_0,v_0)$ associated with an arbitrary connected graph, as stated in the following theorem.
 
\begin{theorem}\label{thm2}
Let $G$ be a connected graph and $G_0$ a finite connected sub-graph of $G$ spanned by its set of vertices $V_0$. Then
\begin{equation}\label{low2}
Q_G(n;v_0,v_0) \geq \sigma_{v_0}\left(\frac{2|\bar G_0|}{n+1} + {\mathscr C}_{G,G_0}(n;v_0)\right)^{-1}\,,
\end{equation}
for every vertex $v_0\in G_0$, where ${\mathscr C}_{G,G_0}(n;v_0)$ is defined, up to a factor $\sigma_{v_0}$, as the probability for a walk starting at $v_0$ 
not to return to $v_0$ before leaving $G_0$ in at most $n$ steps, that is
\begin{equation}\label{CGG0ndef}
{\mathscr C}_{G,G_0}(n;v_0) = \sigma_{v_0}\sum_{\displaystyle v\in\partial_{in}G_0}\sum_{\displaystyle 
\omega:v_0\twoheadrightarrow v \atop 
\displaystyle \omega\subseteq G_0, |\omega|\leq n} p_G(\omega)\sigma_v^{-1}\sigma_v^{out}\,. 
\end{equation}  
\end{theorem}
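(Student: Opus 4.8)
The plan is to convert the desired lower bound on $Q_G(n;v_0,v_0)$ into an \emph{upper} bound on a single auxiliary quantity. For $v\in V(G)$ write $R_n(v)=\sum_{\omega:v_0\twoheadrightarrow v,\,|\omega|\le n}p_G(\omega)$ for the probability that a walk from $v_0$ reaches $v$ within $n$ steps without having returned to $v_0$ in the meantime. I would establish the two estimates
\begin{equation}
Q_G(n;v_0,v)\le Q_G(n;v_0,v_0)\,R_n(v)\qquad\text{for every }v,
\end{equation}
\begin{equation}
\sum_{v\in V(G)}R_n(v)\le \frac{1}{\sigma_{v_0}}\Big(2|\bar G_0|+(n+1)\,{\mathscr C}_{G,G_0}(n;v_0)\Big).
\end{equation}
Granting these, summing the first over $v$ and invoking the sum rule \eqref{sumrule}, namely $\sum_v Q_G(n;v_0,v)=n+1$, yields $n+1\le Q_G(n;v_0,v_0)\sum_v R_n(v)$; combined with the second estimate this rearranges directly into the claimed bound \eqref{low2}.

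The first inequality comes from decomposing a walk $\omega:v_0\to v$ of length $m\le n$ at its \emph{last} visit to $v_0$, at time $j$ say: this factors $\omega$ into a walk $v_0\to v_0$ of length $j$ followed by a walk $v_0\twoheadrightarrow v$ of length $m-j$, and $p_G$ factorises accordingly. Summing over $m\le n$ produces a truncated convolution of the sequences $q_G(j;v_0,v_0)$ and $r_\ell(v):=\sum_{\omega:v_0\twoheadrightarrow v,\,|\omega|=\ell}p_G(\omega)$, which is bounded by the product of their partial sums $Q_G(n;v_0,v_0)$ and $R_n(v)$.

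For the second inequality I split the walks counted by $\sum_v R_n(v)=\sum_{m=0}^n P(\text{no return to }v_0\text{ by time }m)$ according to whether they remain inside $G_0$ up to time $m$ or have already left it. The walks that stay inside contribute $\sum_{v\in V_0}\tilde R_n(v)$, with the extra constraint $\omega\subseteq G_0$; each $\tilde R_n(v)$ is at most the expected number of visits to $v$ before the walk returns to $v_0$ or exits $G_0$, which I bound by $\sigma_v/\sigma_{v_0}$ (see below). Since every neighbour of a $V_0$-vertex lies in $\bar G_0$, one has $\sum_{v\in V_0}\sigma_v=\sum_{v\in V_0}\deg_{\bar G_0}(v)\le 2|\bar G_0|$, giving the first term. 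The walks that have left $G_0$ I cut at their first exit time $k$: the part up to time $k-1$ is a walk $v_0\twoheadrightarrow u\subseteq G_0$ ending at some $u\in\partial_{in}G_0$, followed by one step out (weight $\sigma_u^{-1}\sigma_u^{out}$ after summing the exit edge) and an arbitrary continuation avoiding $v_0$; the continuation sums to at most $n+1$, while the initial parts sum to ${\mathscr C}_{G,G_0}(n;v_0)/\sigma_{v_0}$ by \eqref{CGG0ndef}, yielding the $(n+1){\mathscr C}$ term.

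The main work is the reversibility estimate for the walk killed on leaving $G_0$ or returning to $v_0$: writing its Green function $\hat g$ on $V_0\setminus\{v_0\}$, the detailed-balance relation $\sigma_x\hat g(x,y)=\sigma_y\hat g(y,x)$ lets me trade the start $v_0$ for the endpoint $v$, after which the residual factor is the probability of eventually being killed by a return to $v_0$, which is at most $1$; hence the expected visit count is $\le\sigma_v/\sigma_{v_0}$. This is the one point requiring more than bookkeeping, and it uses only reversibility, so it needs no recurrence hypothesis on $G$, matching the stated generality. The remaining care is purely combinatorial: the length accounting in the first-exit decomposition (to land on the factor $n+1$ rather than a larger multiple of ${\mathscr C}$) and the elementary observation that $V_0$-vertices retain their full degree inside $\bar G_0$, which is precisely what turns $\sum_{v\in V_0}\sigma_v$ into $2|\bar G_0|$.
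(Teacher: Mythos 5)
Your argument is correct in substance, and at its heart it takes a genuinely different route from the paper's, although the outer skeleton coincides: both proofs start from the sum rule \eqref{sumrule}, split walks into those confined to $G_0$ and those that leave it, cut the leaving walks at their first exit to produce the $(n+1)\,{\mathscr C}_{G,G_0}(n;v_0)$ term via \eqref{CGG0ndef}, and use $\sum_{v\in V_0}\sigma_v\le 2|\bar G_0|$ for the volume term. The genuine difference is the key lemma controlling the confined part. The paper works with $\widetilde Q_G(n;v_0,u)=\sigma_u^{-1}\sum_{\omega:v_0\to u,\,\omega\subseteq G_0,\,|\omega|\le n}p_G(\omega)$ (returns to $v_0$ allowed), bounds it by the parabolic maximum principle for the discrete diffusion equation \eqref{diffusioneq}, i.e. $\widetilde Q_G(n;v_0,u)\le\widetilde Q_G(n;v_0,v_0)$, carries the factor $\widetilde Q_G(n;v_0,v_0)$ multiplicatively through both terms, and only at the very end uses $\sigma_{v_0}\widetilde Q_G(n;v_0,v_0)\le Q_G(n;v_0,v_0)$. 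You instead extract the factor $Q_G(n;v_0,v_0)$ once and for all by the last-visit decomposition, and then control the no-return quantities $\tilde R_n(v)$ by time reversal: reversing a walk $v_0\twoheadrightarrow v$ inside $G_0$ gives a walk from $v$ that first hits $v_0$ at its final step, the weights trade by the factor $\sigma_v/\sigma_{v_0}$, and first-hitting walks of different lengths are disjoint events, whence $\tilde R_n(v)\le\sigma_v/\sigma_{v_0}$. So reversibility replaces the maximum principle; both devices are assumption-free (neither proof needs recurrence), your version is arguably more elementary (no diffusion-equation setup), and it produces $Q_G$ directly rather than through $\widetilde Q_G$. One bookkeeping slip should be fixed: with your definition, $R_n(v_0)$ includes the first-return walks, so in the confined part you only get $\tilde R_n(v_0)\le 2$ rather than $\sigma_{v_0}/\sigma_{v_0}=1$, and the first term of your estimate (B) becomes $2|\bar G_0|+\sigma_{v_0}$, which can strictly exceed $2|\bar G_0|$ (take $v_0$ the centre of a large star and $G_0$ with a single outgoing edge). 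The repair is immediate: your step (A) at $v=v_0$ only requires the trivial walk, so set $R_n(v_0)=1$; then $\sum_{v\in V_0}\tilde R_n(v)\le\sigma_{v_0}^{-1}\sum_{v\in V_0}\sigma_v\le 2|\bar G_0|/\sigma_{v_0}$ and the claimed bound \eqref{low2} follows exactly as you outline.
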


\begin{proof}
We define 
$$
\widetilde Q_G(n;v,u) = \sigma_u^{-1}\sum_{\scriptstyle 
\omega:v\to u \atop \scriptstyle \omega\subseteq G_0, |\omega|\leq n}p_G(\omega)
$$
and note that $\widetilde Q_G(n;v,u)$ vanishes if $v\notin V_0$ or $u\notin V_0$, while it satisfies the diffusion 
equation \eqref{diffusioneq} for $v,u\in V_0$. Since the left-hand side of \eqref{diffusioneq} is non-negative (as $\widetilde Q(n;v,u)$ is a non-decreasing function of $n$), it follows from the maximum principle 
for the discrete Laplacian
that $\widetilde Q_G(n;v_0,u)$ assumes its maximal value as a 
function of $u$ at $u=v_0$, i.e.  
\begin{equation}\label{eq.4.4}
\widetilde Q_G(n;v_0,u) \leq  \widetilde Q_G(n;v_0,v_0)\,,\quad u\in G\,.
\end{equation}
From \eqref{sumrule} we have 
\begin{equation}\label{eq.4.5}
n+1 = \sum_{\scriptstyle u\in G_0} \sigma_u \widetilde Q_G(n;v_0,u) + \sum_{\scriptstyle u\in G}\sum_{\scriptstyle 
\omega:v_0\to u\atop \scriptstyle \omega\not\subseteq G_0, 
|\omega|\leq n} p_G(\omega)\,.
\end{equation}
Using \eqref{eq.4.4} and that 
$$
\sum_{\scriptstyle u\in G_0} \sigma_u \leq 2\,|\overline{G}_0|\,,
$$
the first sum on the right-hand side in 
(\ref{sumrule}) can be estimated from above by 
\begin{equation}\label{firstterm}
2\,|\bar G_0|\,\widetilde Q_G(n;v_0,v_0)\,.
\end{equation}
The last sum, on the other hand, can be estimated as follows. Any walk $\omega$ starting at $v_0$ which is not contained in $G_0$ can be 
decomposed in a unique way into a (possibly trivial) walk $\omega':v_0\to  v_0$ which is contained in $G_0$, followed by a walk 
$\omega'':v_0\twoheadrightarrow v$ which is likewise contained in $G_0$ but does not return to $v_0$ and such that 
$v\in\partial_{in} G_0$, and finally a step from $v$ to a vertex $v'\in\partial_{out} G_0$ and a walk $\omega'''$ starting at $v'$. 
Obviously, the lengths of $\omega', \omega''$ and $\omega'''$ sum up to at most $n$ and 
$p_G(\omega) = p_G(\omega')p_G(\omega'')\sigma_v^{-1}p_G(\omega''')$. Hence, 
an upper bound on the last term in \eqref{eq.4.5} is obtained by relaxing the constraint $|\omega'|+|\omega''|+|\omega'''|=n$ to 
$|\omega'|,|\omega''|,|\omega'''|\leq n$, in which case the sum factorizes into three terms: summation over $\omega'$ contributes 
a factor $\sigma_{v_0}\widetilde Q_G(n;v_0,v_0)$, summation over  $\omega'''$ is bounded by $n+1$ by \eqref{sumrule}, whereafter summation over $\omega'', v$, and $v'$ gives a factor $\sigma_{v_0}^{-1}{\mathscr C}_{G,G_0}(n;v_0)$. Hence, we have
 \begin{equation}\label{eq.4.6}
\sum_{\scriptstyle u\in G}\sum_{\scriptstyle \omega:v_0\to u\atop \scriptstyle \omega\not\subseteq G_0, 
|\omega|\leq n} p_G(\omega)\leq  (n+1)\,\widetilde Q_G(n;v_0,v_0){\mathscr C}_{G,G_0}(n;v_0)\,.
\end{equation}  
Using eqs. \eqref{eq.4.5}, \eqref{firstterm}, and  \eqref{eq.4.6} we finally arrive at 
$$
n+1 \leq  \widetilde Q_G(n;v_0,v_0)\left\{2\, |\bar G_0| + (n+1)\,{\mathscr C}_{G,G_0}(n;v_0)\right\}\,,
$$
which implies \eqref{low2} since  $\sigma_{v_0}\widetilde Q_G(n;v_0,v_0) \leq Q_G(n;v_0)$.
\end{proof}

Note that the relation of ${\mathscr C}_{G,G_0}(n;v_0)$ to an exit probability shows that it is bounded by $\sigma_{v_0}$. Clearly, ${\mathscr C}_{G,G_0}(n;v_0)$ is an increasing function of $n$ so the limit 
\begin{equation}\label{conducdef1}
{\mathscr C}_{G,G_0}(v_0) = \lim_{n\to\infty} {\mathscr C}_{G,G_0}(n;v_0) = \sup_{n\geq 1} {\mathscr C}_{G,G_0}(n;v_0)\,,
\end{equation}
exists and is, by definition, the \emph{effective conductance} of $G$ between $v_0$ and the complement of $G_0$. The \emph{effective resistance} of $G$ between $v_0$ and the complement of $G_0$ is defined as 
\begin{equation}\label{resistdef1}
{\mathscr R}_{G,G_0}(v_0) = \left({\mathscr C}_{G,G_0}(v_0)\right)^{-1}\,.
\end{equation}
Clearly, \eqref{low2} then implies 
\begin{equation}\label{Qlowerbound}
Q_G(n;v_0,v_0) \geq \sigma_{v_0}\left(\frac{2|\bar G_0|}{n+1} + {\mathscr R}_{G,G_0}(v_0)^{-1}\right)^{-1}\,,
\end{equation}

Given graphs $G$ and $G_0$ as above let us define $\hat G_0$ to be the graph obtained from $\bar G_0$ by identifying all vertices in $\partial^{\rm out}G_0$ with a single new vertex $v_1$ and leaving out all edges with both endpoints in $\partial^{\rm out}G_0$.\footnote{It should be noted that $\hat G_0$ may contain multiple edges, but the reader may easily verify that all considerations in the present subsection apply with obvious modifications also to graphs with multiple edges.} It is then clear that ${\mathscr C}_{G,G_0}(v_0) = {\mathscr C}_{\hat G_0,\hat G_0-v_1}(v_0) $, which is called the conductance of $\hat G_0$ between $v_0$ and $v_1$. It will also be denoted by ${\mathscr C}_{\hat G_0}(v_0,v_1)$. Since here $\hat G_0$ can be any finite graph, this defines the conductance between any two different vertices $v_0, v_1$ in a finite connected graph $H$ by
\begin{equation}\label{conducdef2}
{\mathscr C}_{H}(v_0,v_1) \,= \, \sum_{\omega:v_0\leftrightarrow v_1}  \sigma_{v_0} p_H(\omega)\,,
\end{equation} 
where we use the notation $\omega:v_0\leftrightarrow v_1$ to denote a path from $v_0$ to $v_1$ that does not hit the end-vertices at intermediate steps. Clearly, ${\mathscr C}_{H}(v_0,v_1)$ is symmetric in $v_0$ and $v_1$. 

Recalling \eqref{lemma1} and noting that any path $\omega: v_0\twoheadrightarrow v_1$ can uniquely be decomposed into a path $\omega': v_0\to v_0$ (possibly trivial) not hitting $v_1$ and a path $\omega'': v_0\leftrightarrow v_1$ we get that 
\begin{equation}\nonumber
 {\mathscr C}_{H}(v_0,v_1) \sum_{\scriptstyle 
 \omega':v_0\to v_0\atop \scriptstyle v_1\notin\omega} p(\omega') \sigma_{v_0}^{-1} \,=\, 1\,, 
\end{equation}
which implies that the resistance ${\mathscr R}_{H}(v_0,v_1) := \big({\mathscr C}_{H}(v_0,v_1)\big)^{-1}$ can be expressed as 
\begin{equation}\label{resistdef2}
{\mathscr R}_{H}(v_0,v_1) =  \sum_{\scriptstyle \omega:v_0\to v_0
\atop \scriptstyle v_1\notin\omega} p(\omega)\sigma_{v_0}^{-1}\,.
\end{equation}

The relation of these definitions to the physical notion of conductance and resistance in electrical networks is perhaps not obvious 
at this stage. From \eqref{conducdef2} it is clear that conductance and resistance of a single edge are equal to $1$ and it is simple to verify, using \eqref{conducdef2} and \eqref{resistdef2}, that the standard laws for composing reststances in a series or in parallel hold. We refer to Ch. 2 of \cite{LyoPer} for a more general and detailed account of these aspects, including Rayleigh's Monotonicity Principle which states that the effective resistance is a non-decreasing function of the edge-resistances. In our case of unit edge resistances this principle implies that contracting an edge in $H$ that does not connect $v_0$ and $v_1$, i.e. deleting the edge and identifying its end-vertices, reduces the resistance  ${\mathscr R}_{H}(v_0,v_1)$ or leaves it unchanged. 

We remark  that the inequality \eqref{cor2} can now be rewritten as 
\begin{equation}\label{resistest2}
{\mathscr R}_{H}(v_0,v_1)  \,\leq \, d_H(v_0,v_1)
\end{equation}
for any pair of different vertices $v_0, v_1$ in a finite connected graph $H$. Moreover, we have that equality holds if $H$ is a tree, i.e.
\begin{equation}\label{treeresist}
{\mathscr R}_{T}(v_0,v_1)  \,=\, d_T(v_0,v_1)\quad \mbox{\rm if $T$ is a finite tree}.
\end{equation}  
Indeed, if $d_T(v_0,v_1)=1$ only one path consisting of the edge connecting $v_0$ and $v_1$ contributes on the right-hand side of \eqref{conducdef2} and gives $1$. If $d_T(v_0,v_1) = k\geq 2$, let $v$ be a vertex in the interior of the unique path connecting $v_0$ and $v_1$ and let $T_1$ be the sub-tree of $T$ spanned by $v$ and its descendants when considering $v_0$ as the root of $T$, and let $T_0$ be the tree spanned by the remaining vertices and $v$. Then $T_0$ and $T_1$ only share the single vertex  $v$ and hence by the law of coupling resistances in a series we have 
$$
{\mathscr R}_{T}(v_0,v_1)  \,=\, {\mathscr R}_{T_0}(v_0,v)  \,+\,{\mathscr R}_{T_1}(v,v_1)\,.
$$
The claim \eqref{treeresist} now follows trivially by induction.

 Given a rooted graph $G$, let $\mathscr R (R)$ denote the resistance between the root $v_0$ and the complement of the ball $B_R(G)$ of radius $R$ around the root, and assume that for some $\kappa\geq 0$ it holds for $R$ large that
$$
{\mathscr R}(r) \geq \mbox{const.}\, R^\kappa\,.
$$
Note that \eqref{resistest2} implies the constraint
\begin{equation}\label{resistexpest}
\kappa\,\leq \, 1\,.
\end{equation}
If the Hausdorff dimension $d_h$ of $G$ exists we obtain, by choosing $G_0 = B_R(G)$ in \eqref{low2} where $R= \lfloor n^{\frac{1}{d_h+\kappa}}\rfloor$, that 
$$
Q_G(n;v_0,v_0) \, \geq \, \mbox{const.}\, n^{\frac{
\scriptstyle \kappa}{\scriptstyle d_h+\kappa}}
$$
and hence,
\begin{equation}\label{dimineq2}
d_s\leq \frac{2d_h}{d_h+\kappa}\,,
\end{equation} 
thus supplementing the lower bound \eqref{dimineq}.

If $G_0^K, K=1,2,3,\dots,$ is a sequence of finite connected graphs as in Theorem~\ref{thm2} 
containing a fixed vertex $v_0$ and such that the graph distance from $v_0$ to $\partial_{in}G_0^K$ tends to infinity 
as $K\to\infty$, then $\sigma_{v_0}^{-1}\lim_{K\to\infty} {\mathscr C}_{G,G_0^K}(v_0)$ exists and equals the escape probability from $v_0$, that is the 
probability for an infinite walk starting at $v_0$ never to return to $v_0$.  Hence, by the discussion of recurrency in Section \ref{sec:5.1}, we conclude that this quantity 
vanishes exactly if $G$ is recurrent. In particular, we get that if $G$ is recurrent then  ${\mathscr C}_{G,G_0^K}(n;v_0)$ converges to $0$ uniformly in $n$ as $K\to\infty$. In order to exploit 
Theorem~\ref{thm2} we need more detailed information on the decay rate of ${\mathscr C}_{G,G_0^K}(n;v_0)$ or ${\mathscr C}_{G,G_0^K}(v_0)$ for an appropriate choice of $G_0^K$. This is a non-trivial problem for general graphs, but if $G$ is a tree we can make use of \eqref{treeresist} as will be seen. 

We are now in a position to apply the previous results to the case of generic random trees and prove the desired upper bound on their spectral 
dimension. Recalling the one-spine character of the generic trees we let $T$ be such a tree and aim at applying Theorem~\ref{thm2} with $G_0^K$ equal to the sub-tree spanned by the vertices of the (finite) branches  rooted at the spine vertices $s_1,\dots, s_K$.

Denoting as previously by $Br_i$ the union of the branches rooted at a fixed spine vertex $s_i$, we then have 
\begin{equation}\label{sizeG0K}
|\bar G_0^K| = \sum_{i=1}^K |Br_i| +K+1\,,
\end{equation}
and obviously, $|\bar G_0^K| \geq |B_K|$. We can now use \eqref{bar-sing} to estimate the growth rate of $|G_0^K|$ by first establishing the following lemma. 

\begin{lemma}\label{lemma5} There exist constants $\bar c>0$ and $u_0>0$ such that for all $K\geq 1$ and all $u>u_0$ the following inequality holds:
\begin{equation}\label{simpleest}
\nu\left(\{T:\,K^{-2}|\bar G_0^K| \geq u\}\right)\;\; \leq\;\; \frac{\bar c}{\sqrt u}\,.
\end{equation}
\end{lemma}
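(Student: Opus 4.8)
The plan is to recognise $|\bar G_0^K|$ as, up to the linear correction in \eqref{sizeG0K}, a sum of independent, identically distributed, and genuinely heavy-tailed random variables, and then to control its upper tail by truncation. Writing $X_i = |Br_i|$, the branches attached at distinct spine vertices are independent and the spine degrees are i.i.d.\ (Section~\ref{sec:gentree}), so the $X_i$ are i.i.d.\ with common law $\nu(\{X_i=N\})=\bar Z_N$. The decisive input is the asymptotic $\bar Z_N = \mathrm{const}\cdot N^{-3/2}(1+O(N^{-1}))$ derived from \eqref{bar-sing}, which shows that $X_i$ has tail index $\tfrac12$ and in particular infinite mean. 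This is the heart of the difficulty: no first-moment (Markov) bound on $S_K:=\sum_{i=1}^K X_i$ nor any exponential-moment estimate is available, and it is exactly the infinite mean that forces the superlinear normalisation by $K^2$ appearing in \eqref{simpleest}.

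First I would turn the local asymptotic into a tail bound. Summing $\bar Z_M\le\mathrm{const}\cdot M^{-3/2}$ over $M\ge N$ and comparing with $\int_N^\infty x^{-3/2}\,dx=2N^{-1/2}$ produces a constant $C>0$ with
\[
\nu(\{X_1\ge N\})\;\le\;C\,N^{-1/2}\qquad\text{for all }N\ge 1,
\]
the finitely many small values of $N$ being absorbed by enlarging $C$. Next, using \eqref{sizeG0K} and fixing $u_0=4$, I would discard the linear term: for $u\ge u_0$ and every $K\ge1$ one has $K+1\le\tfrac12 uK^2$, whence $\{|\bar G_0^K|\ge uK^2\}\subseteq\{S_K\ge\tfrac12 uK^2\}$, reducing the claim to an upper tail estimate for $S_K$.

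The main step is a truncation at the level $M=\tfrac12 uK^2$. Splitting according to whether some branch exceeds $M$,
\[
\nu\big(\{S_K\ge\tfrac12 uK^2\}\big)\;\le\;\nu\big(\{\exists\,i:\,X_i>M\}\big)+\nu\big(\{\textstyle\sum_i\min\{X_i,M\}\ge\tfrac12 uK^2\}\big).
\]
The first term is handled by a union bound and the tail estimate, $K\,\nu(\{X_1>M\})\le CKM^{-1/2}$. For the second, the same tail estimate controls the truncated mean, $\langle\min\{X_1,M\}\rangle_\nu=\sum_{N=1}^{M}\nu(\{X_1\ge N\})\le 2C\sqrt M$, so that Markov's inequality bounds the truncated sum by $2CK\sqrt M/(\tfrac12 uK^2)$. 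With $M=\tfrac12 uK^2$ both contributions are of order $u^{-1/2}$ uniformly in $K$; since the truncation level has been chosen to equalise their orders, adding them yields \eqref{simpleest} with an absolute constant $\bar c$.

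The only genuinely delicate point, beyond tracking constants, is this balancing: the heavy tail means that any cruder choice --- for instance a truncation level independent of $u$ and $K$, or a plain variance bound (the variance is infinite) --- fails to reproduce the $u^{-1/2}$ decay, so matching the truncation scale $M\sim uK^2$ to the target is the whole content of the estimate.
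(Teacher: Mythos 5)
Your proof is correct, but it takes a genuinely different route from the paper's. The paper bounds the tail of $X_K=\sum_{i=1}^K|Br_i|$ by a characteristic-function inequality taken from Breiman, $\nu(\{X_K\geq u\})\leq \alpha u\int_0^{1/u}\bigl(1-\mbox{Re}\,\bar Z(e^{iv})^K\bigr)dv$, and then feeds in the singularity expansion \eqref{bar-sing} directly, carefully expanding $\bar Z(e^{iv})^K$ near $v=0$ (including the branch behaviour of $\sqrt{1-e^{iv}}$) to get $1-\mbox{Re}\,\bar Z(e^{iv})^K\leq \bar c_0 K\sqrt v$ in the regime $K\sqrt v\leq\delta$; integrating and rescaling $u\mapsto uK^2$ gives \eqref{simpleest}. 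You instead work entirely on the real side: you convert the coefficient asymptotics $\bar Z_N=\mathrm{const.}\,N^{-3/2}(1+O(1/N))$ (which the paper derives from \eqref{bar-sing} via transfer theorems) into the tail bound $\nu(\{X_1\geq N\})\leq CN^{-1/2}$, and then run the standard heavy-tail truncation scheme -- union bound for the event that some branch exceeds $M=\tfrac12 uK^2$, plus Markov applied to the truncated sum whose mean is $O(K\sqrt M)$ -- with the truncation level matched to the target so that both contributions are $O(u^{-1/2})$ uniformly in $K$. Your steps check out: the i.i.d.\ structure of the $|Br_i|$ is exactly what the paper itself invokes, the identity $\langle\min\{X_1,M\}\rangle_\nu=\sum_{N=1}^M\nu(\{X_1\geq N\})$ is valid for non-negative integer variables, and the constants close with, e.g., $u_0=4$ and $\bar c$ an absolute multiple of $C$. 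The trade-off is which black box one leans on: the paper needs Breiman's Fourier inequality but only the generating-function singularity \eqref{bar-sing}, whereas you avoid any complex analysis at the cost of invoking the transfer theorem to pass from \eqref{bar-sing} to the coefficient asymptotics; since the paper states both inputs, either is legitimate, and your argument is arguably the more elementary and self-contained of the two.
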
   
\begin{proof}
By \eqref{sizeG0K} it clearly suffices to show \eqref{simpleest} with  $X_K= \sum_{i=1}^K |Br_i|$ replacing $|\bar G_0^K|$. 
We take as starting point the following inequality which can be found, e.g., in \cite[Section 8.7]{Breiman}:
\begin{equation}\label{ineqBr}
\nu\left(\{X_K\geq u\right\})\;\; \leq\;\; \alpha u\int_0^{\frac{1}{u}} \left(1-\mbox{Re}\,\bar Z(e^{iv})^K\right)dv\,,
\end{equation}
where $\alpha$ is a universal constant and $\bar Z(e^{iv})^K$ is the characteristic function of $X_K$ as a sum 
of $K$ independent and identically distributed (i.i.d.) 
random variables. From \eqref{bar-sing} we have
$$
\bar Z(e^{iv}) = e^{\bar c_0\sqrt{1-e^{iv}} + O(|1-e^{iv}|)}
$$
and hence, for $v>0$ sufficiently small,
\begin{eqnarray*}
\mbox{Re}\,\bar Z(e^{iv})^K &=& \mbox{Re}\, e^{K\bar c_0\sqrt{1-e^{iv}} + K O(|1-e^{iv}|)} \\
 &=& \mbox{Re}\, e^{K\bar c_0\sqrt{-iv+O(v^2)} + K O(v)} \\ 
  &=& \mbox{Re}\, e^{\frac{1}{\sqrt 2}\bar c_0 K(1-i)\sqrt{v}\sqrt{1+O(v)} + K O(v)} \\
&=&  e^{\frac{1}{\sqrt 2}\bar c_0 K\sqrt{v} + K O(v)}\cos\left(\frac{1}{\sqrt 2}\bar c_0 K\sqrt{v} + K O(v)\right)\,. 
\end{eqnarray*}
By Taylor expanding the exponential and cosine functions it follows that there exists a $\delta>0$ such that 
$$
1-\mbox{Re}\,\bar Z(e^{iv})^K = \frac{1}{\sqrt 2}\bar c_0 K\sqrt{v} + O(K^2v) \leq \bar c_0 K\sqrt{v}\,,\;\; 
\mbox{for}\;\; 0\leq K\sqrt v\leq \delta\,. 
$$
Using this estimate in \eqref{ineqBr} we get
$$
\nu(\{X_K\geq u\})\;\; \leq\;\; \frac{2\alpha \bar c_0 K}{3\sqrt u}
$$
if $\frac{K}{\sqrt u}\leq \delta$. Upon replacing $u$ by $uK^2$ the claimed inequality follows for $u\geq \delta^{-2}$ with $\bar c = \frac 23\alpha \bar c_0$.
\end{proof}

Setting $K = 2^M$ and $u=M^3$ in \eqref{simpleest} we get 
$$
\sum_{M=1}^\infty \nu\left(\{ T: 4^{-M} |{\bar G_0}^{2^M}| 
\geq M^3\}\right) < \infty\,.
$$
 Hence, by the Borel-Cantelli lemma, we conclude that with probability $1$ it holds that
\begin{equation}\label{boundG0M}
 |\bar G_0^{2^M}| \leq  M^3 4^M
\end{equation}
 for $M> M_0$ where $M_0$ is an integer depending on $T$. Furthermore, given an arbitrary $K\geq 1$, we can choose $M$ such that $2^{M-1}\leq K\leq 2^{M}$ and conclude that 
\begin{equation}\label{boundG0K}
|\bar G_0^K|\, \leq \,  |\bar G_0^{2^{M}}| \leq  M^3 4^M \,\leq \, 4K^2 \Big(\frac{\ln K}{\ln 2}+1\Big)^3\,\leq \, c_1 K^2(\ln K)^3
\end{equation}
for $K> K_0$ , where $c_1>0$ is a numerical constant independent of $T$ while $K_0$ may depend on $T$. 

For a fixed $T\in {\cal T}_\infty$ as above let $n\geq 1$ be given and let $K=\lfloor  n^{\frac 13}\rfloor$  and assume $n$ is large enough so that $K>K_0$. It then follows from \eqref{low2} and \eqref{boundG0K} that 
\begin{equation}\label{lowerQn}
 Q_T(n;s_0,s_0)  \geq Q_T(K^3;s_0,s_0) \geq \Big(\frac{c_1 K^2(\ln K)^3}{K^3+1} + K^{-1}\Big)^{-1} \geq C_1 (\ln n)^{-3}\, n^{\frac 13}\,,
\end{equation}
where $C_1>0$ is a constant and we have also used that ${\mathscr C}_{T,G_0^K}(s_0) = (K+1)^{-1}$ by \eqref{treeresist}. Using 
(\ref{eq:s-dim}), the definition of $d_s$,  
it follows that $d_s\leq \frac 43$ with probability $1$.

 For the sake of completeness we mention that by elaborating on the upper bound \eqref{Qupperbound} on $Q_T(n;s_0,s_0)$ in a similar way as above and using the lower bound on $|B_R(T)|$ in \eqref{BoundsT}, 
we obtain an upper bound on $Q_T(n;s_0,s_0)$ analogous to \eqref{lowerQn}. More precisely, one gets that 
$$
C_1 (\ln n)^{-3} n^{\frac 13}\;\leq\; Q_T(n;s_0,s_0) \; \leq\; C_2  (\ln n)^2 \,n^{\frac 13}  
$$
for a suitable constant $C_2>0$, which of course also implies $d_s=\frac 43$.

\subsection{Spectral dimension of causal triangulations}\label{sec:5.4}

In this section we show that the spectral dimension of the CDT ensemble defined in Section \ref{sec:bijection} equals $2$. Thus, even though the Hausdorff dimensions of the CDTs and of the corresponding generic tree are identical the spectral dimensions are not. This is due to the higher connectivity of the CT which leads to  different behaviour of the resistance between the root and the complement of the ball around the root for large radius.

To obtain the upper bound $d_s\leq 2$ we make use of an argument that is most easily understood in terms of resistance estimates. Considering an arbitrary infinite causal triangulation G, let $G_R$ be the graph obtained from the ball $B_R(G)$ by collapsing its boundary to a single vertex $u$ (i.e. by contracting all the boundary edges). Then the resistance of $G$ between the central vertex $S_0$ and the complement of the ball $B_R(G)$ is identical to the resistance of $G_R$ between $S_0$ and $u$. By contracting the edges in each $S_r, r=1,\dots, R$, we obtain, by the Rayleigh Monotonicity Principle, a network with lower resistance between $S_0$ and $u$. On the other hand, this network is a series of resistances $\mathscr R_0, \mathscr R_1,\dots, \mathscr R_R$, where $\mathscr R_k$ is the resistance of $\Delta(\Sigma_k)$ unit resistances connected in parallel. Hence,
\begin{equation}\label{NWest}
{\mathscr R}_{G,B_R}(S_0) \, \geq \, \sum_{k=0}^R \frac{1}{\Delta(\Sigma_k)}\,.
\end{equation}
In order to make use of \eqref{Qlowerbound}
we need a suitable lower bound on the resistance ${\mathscr R}_{G, B_R}(S_0)$. In view of \eqref{NWest} we 
therefore want to estimate the probability $\rho(\{G: \Delta(\Sigma_k) \,> \, K\})$. For this purpose we use \eqref{ballprob} to write 
\begin{equation}\label{rhoballmeasure1} 
\rho(\{G: B_R(G) =  B_R(G_0)\}) = D_{R}(G_0)\,2^{D_{R}(G_0)+1} 2^{-\|B_R(G_0)\|}\,
 \end{equation}
for any fixed infinite causal triangulation $G_0\in {\cal C}$, where it is used that $Z_0=\frac 12$ and $\zeta_0=\frac 14$ in \eqref{ballprob} for the uniform planar tree and also $\|B_R(G_0)\|= |B_{R+1}(\beta(G_0))|$.
 Since the number of causal triangulations of the annulus $\Sigma_{k}$ with $l_k$ vertices on the inner boundary, among which one is marked, and $l_{k+1}$ vertices on the outer boundary equals $\binom{l_k+l_{k+1} -1}{ l_k-1}$, it follows by a straight-forward combinatorial argument that 
\begin{align}
\rho(\{G: D_R = l_R\,,\, D_{R+1}=l_{R+1}\}) = l_{R+1}\,2^{- l_{R+1}}\binom{l_R+l_{R+1} -1}{ l_R-1}\frac{1}{R(R-1)}\Big(\frac{R-1}{2R}\Big)^{l_R}\,,
\end{align}
see \cite{Durhuus:2009sm} for details. Summing this identity over $l_R+l_{R+1} > K$ then yields 
\begin{equation}\label{rhoballmeasure2} 
\rho(\{G: \Delta(\Sigma_R) \,> \, K\}) = \frac{K+2R-1}{2R-1}\Big(1-\frac{1}{2R}\Big)^K\,.
\end{equation}
 Now, let $a>2$ be fixed and define 
$$
{\cal A}_R \,=\, \{G : \Delta(\Sigma_R) > aR\ln R\}\,.
$$
Then  \eqref{rhoballmeasure2} implies that  $\rho({\cal A}_R) \leq (1+ a\ln R)R^{-a}$, and hence 
$$
\sum_{R=1}^\infty \rho({\cal A}_R)\,<\, \infty\,.
$$
By the Borel-Cantelli lemma it follows that with probability $1$ it holds that 
$$
\Delta(\Sigma_R) \, \leq a R\ln R
$$
for $R$ large enough.  Consequently, for all such $G\in {\cal C}_\infty$ we have 
\begin{equation}\label{lowresist}
{\mathscr R}_{G, B_R}(S_0) \, \leq \, \mbox{\rm const.}\sum_{k=1}^R (ak\ln k)^{-1}\, \leq \, \mbox{\rm const.}\ln\ln R\,.
\end{equation}

Setting  $R = \lfloor n^{\frac 13}\rfloor$ in \eqref{Qlowerbound} it follows from the bound \eqref{boundG0K} and the previous estimate that

\begin{equation}\label{lowerQ}
Q_G(n;S_0,S_0) \,\geq \, \Big(\frac{c_1 R^2(\ln R)^3}{n+1} +(a\ln\ln R)^{-1}\Big)^{-1}  \,\geq \, C_1^\prime \ln\ln n
\end{equation}
for a suitable constant $C_1^\prime>0$ (depending on $G$). By the definition of $d_s$ this evidently implies the desired upper bound $d_s\leq 2$. 

To obtain a useful upper bound on $Q_G(n;S_0,S_0)$ the inequality \eqref{Qupperbound} is not applicable since it does not capture the dependence of $d_s$ on the behaviour of the resistance between the root and the complement of balls at large radii. In our particular case, however, the bound 
\begin{equation}\label{upperresist}
Q_G(n;S_0) \,\leq \, \sigma_{S_0}\, {\mathscr R}_{G,B_R}(S_0)\quad\mbox{\rm for $n\leq R$}\,,
\end{equation}
which follows immediately from the definitions of $Q_G$ and ${\mathscr R}_{G,B_R}(S_0)$, is sufficient, provided a suitable upper bound on the resistance can be obtained. In \cite{curien2019geometric} it is shown by an argument requiring rather advanced probabilistic techniques that the bound 
$$
{\mathscr R}_{G,B_R}(S_0)\,\leq \, e^{\mbox{\rm const.}\sqrt{\ln R}}
$$
holds for $R$ sufficiently large with probability $1$. Hence, it follows by setting $R=n^b$ in \eqref{upperresist}, where $b\geq 1$,  that with probability $1$ we have
$$
Q_G(n;S_0) \,\leq \, \sigma_{S_0}\, e^{\mbox{\rm const.}\sqrt{\ln n}}
$$
for $n$ large enough. Clearly, this implies the lower bound $d_s\geq 2$.

\section{Curvature and matter fields on the CDT}\label{sec:matterfields}

Modifications of the graph weights, by introducing terms that bias the vertex degree, or adding extra degrees of freedom (often referred to as `matter fields') to the graphs, might lead to different critical behaviour.  Some of these modifications can be analysed using the bijection $\beta$, but understanding of others remains seriously incomplete. Some examples are discussed here in the grand canonical ensemble framework. 

\subsection{Curvature}\label{subsec:curvature}

The simplest elaboration of the graph weights is to introduce an extra factor $q^{\sigma_{v}-6}$ for each vertex 
into the definition of $W_M$ \eqref{eqn:Wdefn}. This is the analogue of
including the Ricci scalar curvature term from the Einstein action in a continuum gravity theory. It is trivial at fixed topology in two dimensions because $\sum_{v\in G}(\sigma_v-6)$ is proportional to the Euler characteristic of $G$ so $W_M$ is simply multiplied by a constant.  

A different higher curvature term was proposed by  \cite{DiFrancesco:1999em}. Recalling the definition of  $\sigma_{bv}$ and  $\sigma_{fv}$ from Section \ref{subsec:CDT},
introduce the extra factor
\begin{align}
   Q(p,q)= \prod_{v\in G\backslash S_0} q^{\half\abs{\sigma_{fv}-2}}\,p^{\half\abs{\sigma_{bv}-2}}\label{eqn:curv_weight}
\end{align}
into the graph weight for the disk partition function \eqref{eqn:Wdefn}.
Decreasing $q,p$ from 1 enhances the weight for vertices which belong to 3 triangles each in the forward and backward direction and thus introduces a bias towards graphs that, at least locally, are regular triangulations of the plane.  (It was argued in \cite{Glaser:2016smx} that this is equivalent to the naive discretization of an extrinsic curvature squared term in the continuum gravity action.)   However, it is straightforward to see that the critical properties of $W_M$ \cite{Durhuus:2009sm} remain unchanged as follows.
\begin{figure}
\sidecaption[t]
% Use the relevant command to insert your figure file.
% For example, with the graphicx package use
%\begin{center}
a)  \includegraphics[scale=1.0, trim=0cm 0cm 3cm 0cm,clip]{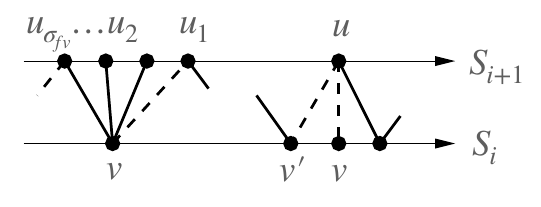}
b)\includegraphics[scale=1.0, trim=2.5cm 0cm 0cm 0cm,clip]{curvatureX.pdf}
 % \end{center}
% figure caption is below the figure
\caption{The neighbourhood of a vertex $v\in G$ when a) $\sigma_{fv}>1$, and b) $\sigma_{fv}=1$. Dashed lines represent edges that are present in $G$ but not in the tree $T=\beta(G)$.}
\label{fig:ecurv}       % Give a unique label
\end{figure}
Consider a vertex $v\in S_i(G)$ which shares edges with $u_m\in S_{i+1}(G), m=1,\ldots \sigma_{fv}$, see Fig. \ref{fig:ecurv} a. 
Then, making use of the bijection $\beta$, the $q$-dependent part of \eqref{eqn:curv_weight} is given by,
 \begin{align}\label{eqn:forward}
         \prod_{v\in G} q^{\half\abs{\sigma_{fv}-2}}= \prod_{\substack{v\in \beta(G):\, \sigma_{fv}\ge 2}} q^{\half(\sigma_{fv}-2)}\, \prod_{\substack{ u\in \beta(G):\, \sigma_{fu}=1}} q^{\half}.
    \end{align}
 The $p$-dependent part of \eqref{eqn:curv_weight} can be rewritten to give
\begin{align}\label{eqn:backward1}
     \prod_{v\in G\backslash S_0} p^{\half\abs{\sigma_{bv}-2}} =\prod_{\substack{v\in G\backslash S_0:\, \sigma_{bv}>1}} p^{\half(\sigma_{bv}-2)}\, \prod_{\substack{u\in G\backslash S_0: \, \sigma_{bu}=1}} p^{\half}.
\end{align}
Vertices  $u\in G$ with $\sigma_{bu}=1$ cannot be the leftmost descendant of a vertex $v\in\beta(G)$,  see Fig.\, \ref{fig:ecurv} a. So each vertex $v\in\beta(G)$ with $\sigma_{fv}\ge 2$ is associated with precisely $\sigma_{fv}- 2$ vertices $u\in G$ with $\sigma_{bu}=1$ which gives
\begin{align}\label{eqn:backward2}
    \prod_{\substack{u\in G\backslash S_0:\, \sigma_{bu}=1}}p^{\half}=\prod_{\substack{u\in \beta(G):\, \sigma_{fu}\ge 2}} p^{\half( \sigma_{fu}- 2)}.
\end{align}
Finally, vertices $u\in S_{i+1}(G)$ with $\sigma_{bu}>1$ must be the leftmost descendant of a vertex $v\in\beta(G)$,  see Fig.\, \ref{fig:ecurv} b; each vertex $v'\in S_i(G)$ with $\sigma_{fv'}=1$ then increments $\sigma_{bu}$ by one, so,
\begin{align}\label{eqn:backward3}
    \prod_{\substack{v\in G\backslash S_0:\, \sigma_{bv}>1}} p^{\half(\sigma_{bv}-2)}=\prod_{\substack{u\in \beta(G):\,\sigma_{fu}=1}} p^{\half}.
\end{align}
Combining \eqref{eqn:curv_weight}-\eqref{eqn:backward3} gives
\begin{align}
    Q(p,q)=\prod_{\substack{v\in \beta(G):\, \sigma_{fv}\ge 2}} (pq)^{\half(\sigma_{fv}- 2)}\, \prod_{\substack{u\in \beta(G):\, \sigma_{fu}=1}} (pq)^{\half},
\end{align}
so that each leaf in $T=\beta(G)$ gets a weight    $(pq)^{\half}$, and all other vertices a weight $(pq)^{\half(k_v-1)}$, where $k_v$ is the number of descendants.

Without loss of generality set $p=q$;
the recurrence \eqref{eqn:wrecurr} is then replaced by
\begin{align}\label{eqn:wqrecurr}
     \tilde w_h(g,q,z)&=q+\sum_{k=1}^\infty g^{2k}q^{k-1}(\tilde w_{h-1}(g,q,z))^k  \nonumber\\
    &=\frac{q+g^2(1-q^2)\tilde w_{h-1}(g,q,z)}{1-g^2q\tilde w_{h-1}(g,q,z)}.
\end{align}
It is easy to show (by direct substitution in \eqref{eqn:wqrecurr}, and using \eqref{eqn:wrecurr}) that the solution satisfying $\tilde w_1(g,q,z)= zg^{-1}$ is
\begin{align}
     \tilde w_h(g,z)= q-q^{-1}+\frac{q^{-1}}{1+g^2(1-q^2)}\,w_h\left(\frac{g}{1+g^2(1-q^2)}, qz+g(1-q^2)\right).
\end{align}
It follows from  \eqref{eqn:Adefn}  that the functions $\tilde w_h(g,z)$ for every $h$ are analytic  in the region 
\begin{align}
    A: \abs{g}<(1+q)^{-1},\;\abs{z}<1,
\end{align}
so this modification has no effect on the large graph behaviour and the model is in the same universality class for all $q$. 
It is clearly possible to define 
curvature dependent weights that take a form different from  \eqref{eqn:curv_weight}, for example   $q^{|\sigma_{fv}+\sigma_{bv}-4|}$ or $q^{(\sigma_{fv}+\sigma_{bv}-4)^2}$; but nothing is known about such systems.

\subsection{Dimers}\label{subsec:dimers}

Dimer models on fixed lattices exhibit a rich structure. In particular they have a critical point, at which the dimers condense, whose scaling limit is related to the Lee-Yang singularity and is described by a conformal field theory (CFT) \cite{Cardy:1985yy}.  
The model of dimers coupled to CTs, defined below, can be solved by a bijection, which is a generalization of $\beta$, to labelled trees \cite{Atkin:2012yt,Ambjorn:2012zx,Ambjorn:2014voa,Wheater:2021vnb}. There are new phases in which the interaction between the dimers and the graphs becomes strong and changes the universality class from the pure CT case of Section \ref{subsec:partitionfn}.
The steps to demonstrate this are outlined here, full details are in \cite{Wheater:2021vnb}.

\begin{figure}
\sidecaption[t]
% Use the relevant command to insert your figure file.
% For example, with the graphicx package use
%\begin{center}
%  \includegraphics[scale=0.5,trim=0cm 22cm 0cm 0.0cm,clip]{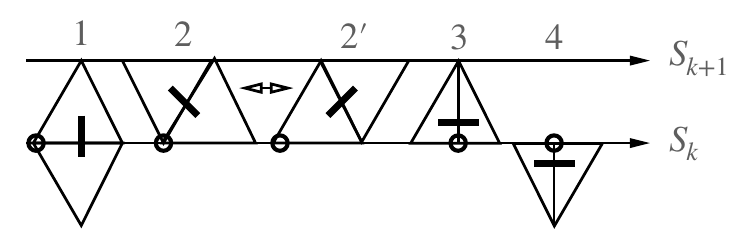}
 \includegraphics[scale=1.0]{DimerTypesX.pdf}
%  \end{center}
% figure caption is below the figure
\caption{The possible types of dimer on a CT, together with their index; the vertex that owns the dimer in each case is marked with a circle. The double arrow indicates the equivalence of types $2$ and $2'$ in $W_M$.}
\label{fig:dimertypes}       % Give a unique label
\end{figure}
Dimers are objects that are dual to edges, and may be assigned freely subject to the \emph{dimer rule} that each triangle is allowed to contain only one dimer. The possible types of dimer on a CT, and the vertices which`own' them,  are shown in Fig. \ref{fig:dimertypes}; we will assume that there are no dimers allowed on the edges that are attached to the central vertex $S_0$, or that enter the boundary triangles of $G\in {\C C}^{(h)}$.  We denote  the dimer configurations  allowed by the dimer rule for $G\in\mathcal C$  by $\mathcal{L}(G)$; and the number of times a dimer of type $i$ appears in a dimer configuration $l\in\mathcal{L}(G)$ by $l_i$. The disk partition function for this model is then defined by assigning each dimer of type $i$ a weight $\xi_i$ and is given by
\begin{align}
    \label{eqn:Wdimerdefn}
    W_M(g,\{\xi\},z;h)&=\sum_{G\in {\C C}^{(h)}}\sum_{l\in\mathcal{L}(G)} \abs{S_{h-1}(G)}\, (z/g)^{\abs{S_{h-1}(G)}}\,  g^{\Delta(G)}\xi_1^{l_1}\xi_2^{l_2}\xi_{2'}^{l_{2'}}\xi_3^{l_3}\xi_4^{l_4}.
\end{align}
A crucial simplification \cite{Atkin:2012yt} is provided by the observation that, for every graph-and-dimer configuration with a type $2'$ dimer there is another which differs only by having a type $2$ dimer on a flipped edge, see Fig \ref{fig:dimertypes}. It follows that
\begin{align}
    W_M(g,\xi_1,\xi_2,\xi_{2'},\xi_3,\xi_4,z;h)=W_M(g,\xi_1,\xi_2+\xi_{2'},0,\xi_3,\xi_4,z;h),
\end{align}
so $\xi_{2'}$ can be set to zero, and the sum over dimer configurations limited to those with no type $2'$ dimers. It is convenient also to let type  0  mean no dimer and then define $\xi=\{\xi_0=1,\xi_1,\xi_2,\xi_3,\xi_4\}$. 
We see, by applying  the dimer rule, that each vertex $v$ can own at most one of the type $0,1,2,3$ dimers, and that if it owns a type $0,2,3$ dimer it can also own a type $4$, but that the combination of type $1$ and type $4$ is forbidden. So we assign to $v$ a two-component label $\ell_v$, denoting the dimers it owns, which can take values
\begin{equation}  \ell_v=(p_v,q_v) \in   \{0,1,2,3\}\times \{0,4\}\setminus \{(1,4)\}.\end{equation}
The dimer rule then implies a set of constraints $\mathcal N$  on the allowed labels, $\ell_v$ and $\ell_{v'}$, of neighbouring vertices, $v$ and $v'$, in $G\in\mathcal C$. 

\begin{figure}
\sidecaption[t]
%\begin{center}
  \includegraphics[scale=1.0]{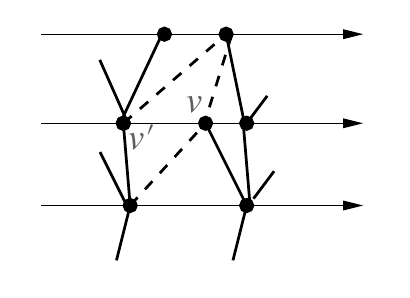}
% \end{center}
% figure caption is below the figure
\caption{Dashed lines represent edges that are present in $G$ but not in the tree $T=\beta(G)$. If $p_{v'}=3$ then $p_v=1$ is forbidden, which is a non-local constraint on $T$.}
\label{fig:clash}       % Give a unique label
\end{figure}

Since each $v\in G$ is also a vertex of the corresponding tree  $T=\beta(G)$ the labels $\ell_v$ can also be associated with the vertices of $T$. Then $\mathcal N$ induces a set of rules for the labelling of $v\in T$ \cite{Wheater:2021vnb}. These rules are local in $T$
with one exception which is illustrated in Fig.\ref{fig:clash}. Two vertices which are nearest neighbours in $G$ can be arbitrarily widely separated in $T$; we deal with this non-local constraint in $T$ by forbidding  $p_v=3$ if $v$ is the most anticlockwise descendant of another vertex. 
The outcome is a model of dimers on CT that is equivalent to a model of labelled trees with a set of local constraints  $\widetilde {\mathcal N}$ on the labels; this model can in turn be solved by generalizing the methods of Section \ref{sec:gce}.

To solve this model by the decomposition used in Section \ref{sec:gce}, we have to keep track of the label $\ell$ assigned to  the vertex adjacent to the root of the tree $T$ (the root itself has no label). We then denote  the  allowed label configurations by $\mathcal L^\ell(T)$ and define the corresponding partition function
   \begin{align}
    w^\ell_h(g,\xi,z)&=%g^{2}
    \sum_{h'\le h}\,\sum_{T\in\C T^{(h')}} \sum_{l\in\mathcal L^\ell(T) }(z/g)^{\abs{S_{h}(T)}} \left(\prod_{v\in T \backslash r}
    g^{2(\sigma_v-1)} \xi_{p_v}\xi_{q_v}\right).\label{eqn:wldefn}
\end{align}
It is easy to show, by the arguments used in Section \ref{subsec:partitionfn}, that the disk partition function is given by
    \begin{align}
    \label{eqn:wdimerdefn}
    W_M(g,\xi,z;h)&=g^{2}
    \lderiv{z}w^{(0,0)}_h(g,\xi,z).
    \end{align}
    
    \begin{figure}
\begin{center}
  \includegraphics[scale=1.0]{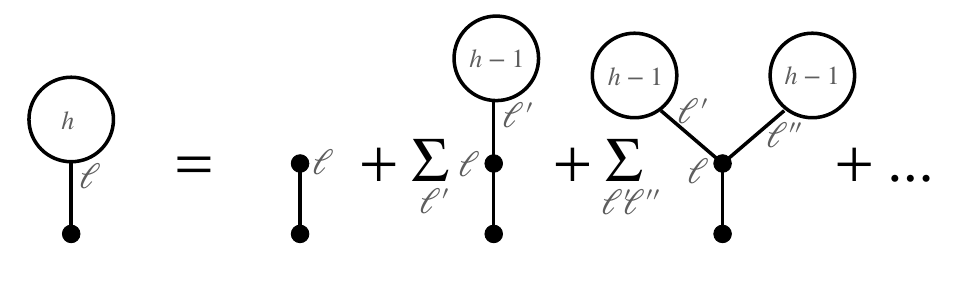}
 \end{center}
% figure caption is below the figure
\caption{The decomposition of trees leading to \eqref{eqn:Frelns}. The sums over labels are constrained to satisfy the labelling rules $\widetilde {\mathcal N}$. }
\label{fig:labeltreecutting}       % Give a unique label
\end{figure}

The trees contributing at height $h$ to \eqref{eqn:wldefn} can be decomposed into trees of height $h-1$ as shown in Fig. \ref{fig:labeltreecutting}. The local nature of the labelling rules ensures that it is only necessary to keep track of the labels at the first vertex of the component trees so the right hand side is still made up of geometric series, albeit more complicated than for the case without dimers \eqref{eqn:wrecurr}. The relationships obtained by resumming these series can be reduced to just two for $w_h^{(0,0)}$ and $w_h^{(2,0)}$ which take the form
\begin{align}
    w^{(0,0)}_{h+1}=  F^{(1)}\left( w^{(0,0)}_{h},w^{(2,0)}_{h}, g,\xi\right), \quad w^{(2,0)}_{h+1}= F^{(2)}\left( w^{(0,0)}_{h},w^{(2,0)}_{h}, g,\xi\right)\, ,\label{eqn:Frelns}
\end{align}
where
   \begin{align}
    F^{(1)}(f_1,f_2,g,\xi)&=\frac{1}
    {1-B}, \nonumber\\
    F^{(2)}(f_1,f_2,g,\xi)&=\frac{\xi_2}{1-B}\left(\frac{g^2f_1+g^2f_2+g^4\xi_3f_1}{1-g^4\xi_3\xi_4f_1}\right),\nonumber \\
    B=g^2\xi_1f_1+g^4\xi_4\xi_3f_1+&\frac{g^2(g^2\xi_3f_1+f_1+f_2)(1+g^2\xi_4f_1+g^2\xi_4f_2)}{1-g^4\xi_3\xi_4f_1}. \label{thend}
\end{align}
Iterating these relations with initial data $w^\ell_1(g,\xi,z)$ (which are easily computed) will give the partition functions for any finite height $h$. 

The grand canonical partition functions unconstrained by height are given by
 \begin{align}
   w^\ell_\infty(g,\xi)
    &=%g^{2}
    \sum_{T\in\C T} \sum_{l\in\mathcal L^\ell(T) } \left(\prod_{v\in T \backslash r}
    g^{2(\sigma_v-1)} \xi_{p_v}\xi_{q_v}\right).
    \label{eqn:winfdef}
\end{align}%
They are simply the fixed points of the recurrence equations so satisfy \eqref{eqn:Frelns} with $w^\ell_{h+1}=w^\ell_{h}=w^\ell_\infty$ for $\ell= (0,0),(2,0)$.
These equations reduce to a quartic equation for $w_\infty^{(0,0)}$ which can in principle be  solved exactly to determine the partition functions, although the explicit form of the solutions is not very useful and we will proceed in a different way. The solutions are also the limits, if these exist,
of the sequences $\mathcal S^\ell=(w^{\ell}_{h},h=1,2,\ldots)$.

As usual we expect that the functions $w^\ell_{h}(g,\xi,z)$ for every $h$, and the function $w^\infty_{h}(g,\xi)$, are analytic in some region 
\begin{align}
    A_\xi: \abs{g}<g_c(\xi),\;\abs{z}<z_c(\xi).
\end{align}
However this depends upon the sequences $S_\ell$ being smooth and converging to $w^\ell_\infty$.  If $\xi_i\ge 0, \forall i$ then $F^{(1)},F^{(2)}$ are positive convex functions of their first two arguments and the properties of the system are basically the same as for the pure CT model; the presence of the dimers does not drive any change in the geometrical properties, and there is no long-range cooperative behaviour of the dimers themselves. If the $\xi_i$ are sufficiently negative then convexity is no longer guaranteed so this behaviour can change. Indeed if they are too negative then %positivity fails, 
the sequences $\mathcal S^\ell$ become oscillatory, and the model has no meaning in the statistical mechanical sense. New and interesting physics emerges for an intermediate set of dimer weights $\Xi$ which separate the CT-like region  from the oscillatory region.

We proceed by analysing the behaviour in the vicinity of $g=g_c(\xi)$ assuming the convergence properties discussed above. Adopting the simplified notation  $w^1_\infty\equiv w_\infty^{(0,0)}(g,\xi) $ and $w^2_\infty\equiv w_\infty^{(2,0)}(g,\xi)$ we have
\begin{align}
    w^k_\infty=F^{(k)}(w^1_\infty,w^2_\infty,g,\xi),\quad k=1,2,\label{eqn:finfreln}
\end{align}
and
\begin{align}
     \frac{\partial w^{k}_\infty}{\partial g}=\left((1- \mathbb T)^{-1} \right)^{kl}\frac{\partial \phantom{F}}{\partial g}F^{(l)}(w^1_\infty,w^2_\infty,g,\xi),
\end{align}
where
\begin{align}
    \mathbb T^{kl}=\frac{\partial \phantom{F}}{\partial w^{l}_\infty}F^{(k)}(w^1_\infty,w^2_\infty,g,\xi)     \equiv F_{l}^{(k)}.
\end{align}
At $g=0$, $\mathbb T$ vanishes; as $g\uparrow g_c(\xi)$,  $w^{k}_\infty$ develops non-analytic behaviour when the largest eigenvalue of $\mathbb T$ reaches one. We will denote: by $w^k_c$  the value of $w^{k}_\infty$ at $g_c(\xi)$; 
by  $(\lambda_1=1,\lambda_2)$ the eigenvalues at criticality of $\mathbb T=\mathbb T_c$; by $u_{1,2}$ the corresponding eigenvectors; and by $\overline u_{1,2}$ vectors orthogonal to $u_{1,2}$ respectively. $\mathbb T_c$ is not symmetric and it can be shown that, if the second eigenvalue $\lambda_2=1$, then for some values of $\xi$ it has Jordan normal form where $u_1$ is a regular eigenvector and $\mathbb T_c u_2=u_2+\epsilon(\xi) u_1$.

Expanding \eqref{eqn:finfreln}  about the critical point by setting
\begin{eqnarray}
    w^i_{\infty}&=&w^i_{c}+\phi^i,\nonumber\\
    g&=&g_c(\xi)-\Delta g,
\end{eqnarray}
we obtain
\begin{eqnarray} (1-{\mathbb{T}})^{ij}\phi^j&=& -\Delta g\left(\frac{\partial{F^{(i)}}}{\partial g}+\frac{\partial {\mathbb{T}}^{ij}}{\partial g}\phi^j\right)
+\frac{1}{2}F^{(i)}_{l k}\, \phi^l\phi^k\nonumber\\ &&+\frac{1}{3!}F^{(i)}_{kl m}\phi^k\phi^l\phi^m +O\left((\Delta g)^2,\phi^4\right).\label{eqn:master}
\end{eqnarray}
The different phases of the model arise when $\xi$ is tuned so that particular combinations of the coefficients in \eqref{eqn:master} vanish. With no constraints on these coefficients we obtain the generic case whose behaviour is the same as the pure CT model,
\begin{equation}
    \phi^i=-\phi_c(\xi) (\Delta g)^\frac{1}{2}  u^i_{1} +O(\Delta g)
    \,,\label{eqn:phinfg}
\end{equation}
where $\phi_c(\xi) >0$.
Imposing the constraint 
\begin{equation}
       \bar u^i_{2}(u^l_{1} F^{(i)}_{l k}u^k_{1}) = 0, 
    \label{tricritcond}
\end{equation}
leads to the \emph{Tri-critical} phase in which 
\begin{equation}
    \phi^i=-\phi_c(\xi) (\Delta g)^\frac{1}{3}  u^i_{1} +O((\Delta g)^\frac{2}{3})
    \,.\label{eqn:phinftc}
\end{equation}
Imposing the additional constraint
\begin{equation}
    \bar u^i_{2c}\frac{\partial{F^{(i)}}}{\partial g}= 0,
    \label{ddconstraint}
\end{equation}
leads to the \emph{Dense Dimer} phase. In this case $\phi_i$ behaves as in \eqref{eqn:phinfg} but other properties are different as we discuss next. The constraints \eqref{tricritcond} and \eqref{ddconstraint} are conditions on the dimer weights $\xi$; the first defines the region $\Xi$ introduced above.

The unconstrained disk partition functions \eqref{eqn:winfdef} can be written in the form
\begin{align}
     w^\ell_\infty(g,\xi)&=\sum_{N=1}^\infty w^\ell_N(\xi) g^{2N}.
\end{align}
It follows by general considerations from  \eqref{eqn:phinfg} and \eqref{eqn:phinftc} that
\begin{align}
    \lim_{N\to\infty}\frac{\log w^\ell_N(\xi)}{N} = -\log g_c(\xi);
\end{align}
therefore $\mu(\xi)=-\log g_c(\xi)$ is the thermodynamic free energy density, and  the dimer density is then defined by $  \chi(\xi)= -\xi_j\,\partial_{\xi_j} \log g_c(\xi)$.
In the pure CT phase  $\chi(\xi)$ is an analytic function, but as $\xi$ approaches $\xi_c\in\Xi$ 
it develops non-analytic behaviour,
\begin{align}
    \chi(\xi)= R_1(\xi)+R_2(\xi)\abs{\xi-\xi_c}^\sigma,\label{}
\end{align}
where $R_{1,2}$ are regular functions and $\sigma$ is usually called the dimer (density) exponent. If $\xi_c$ is in the Tri-critical phase, 
$\sigma=\half$ and $\chi$ itself remains finite but its derivative diverges; on the other hand if $\xi_c$ is in the  Dense Dimer phase then $\sigma=-\frac{1}{3}$, 
$\chi$ diverges at $\xi=\xi_c$ and the dimers condense, hence the name.

For each of the new phases, the local Hausdorff dimension $d_h$ and the scaling amplitude can be calculated. Although these phases only occur when dimer weights are negative, and individual graph weights can certainly be negative, it turns out that the theory at $g=g_c$ is nonetheless  described by a bijection to a labelled single spine tree \cite{Ambjorn:2014voa}. Thus, formally, at the level of expectation values it is possible to repeat the considerations of Section \ref{sec:hausdorff}; when $\lambda_2<1$ we find that $d_h=1$, which is not very interesting from the physical point of view. However there is still some freedom in the choice of $\xi$ which can be adjusted to impose the condition that $\mathbb T_c$ is not diagonalizable but has a Jordan normal form; this causes the finite trees attached to the spine in a typical graph  to become more bushy, and for both the Tri-critical and Dense Dimer phases it can be shown that 
\begin{align}
    \expect{B_R}{}=\mathrm{const.}\, R^3 +O(R^2),\label{eqn:dimerdh}
\end{align}
so the local Hausdorff dimension $d_h=3$. Unlike the pure CT case of Section \ref{subsec:partitionfn}, the disk partition functions at finite $h$ cannot be computed in closed form but their structure is very similar. 
Formally the scaling amplitudes are defined
 by setting $g=g_c(\xi)- \theta^{d_H}$, $y=y_c(\xi)(1-Y\theta\Lambda^{-d_H^{-1}}) $, $h=H\theta^{-1}\Lambda^{-d_H^{-1}}$ and taking $\theta\to0$, 
\begin{align}
 W_M^s(\Lambda,\xi,Y;H)= \lim_{\theta \to 0}  W_M(g_c(\xi)- \theta^{d_H},\,y_c(\xi)(1-\theta Y\Lambda^{-d_H^{-1}}),H\theta^{-1}\Lambda^{d_H^{-1}}).
\end{align}
In practice they can be computed in the scaling limit where the finite difference equations \eqref{eqn:Frelns} become solvable differential equations. 
  In the Tri-critical phase, $d_H=3$ and
   \begin{align}
      W_M^s(\Lambda,Y,H)&= \mathrm{const.}\, \frac{\sum_i\alpha_i e^{-\alpha_i\Lambda^\third H}}{\left(\sum_i (\alpha_i^2\Lambda^\third +\alpha_i Y)e^{\alpha_i\Lambda^\third H}\right)^2},\label{eqn:WsTC}
  \end{align}
  where the sum runs over the three cube roots of unity, $\alpha_i,\,i=1,2,3$.
  While this amplitude has an exponential decay envelope similar to the CT case \eqref{eqn:Ws}, it has oscillations superimposed. This reflects the negative dimer weights; taking the weights more negative destroys the convergence of the sequences $\mathcal S^\ell$ so the scaling limit no longer exists.
  In the Dense Dimer phase, $d_H=2$ and 
  \begin{align}
   W_M^s(\Lambda,Y,H)&= \mathrm{const.}\, \frac{\Lambda}{\left(\Lambda^\half\sinh H\Lambda^\half +Y\cosh H\Lambda^\half\right)^2}.\label{eqn:WsDD}
\end{align}
From the physical point of view this is the most interesting phase. At large $H$ the behaviour of $W^s_M$ deviates from the pure CT case by exponentially damped terms, as if there is some kind of matter interacting weakly with gravity. This is consistent with what is known numerically about other matter degrees of freedom (albeit with only positive weights) interacting with CT, which we discuss next.

\subsection{Ising spins}\label{subsec:ising}

The Ising model on a fixed two-dimensional square lattice was first solved by Onsager in 1944. It is very well known to exhibit a second order phase transition between a disordered phase at weak coupling (high temperature) and an ordered phase at strong coupling (low temperature);  the scaling limit at the critical coupling is a conformal field theory containing a single Majorana fermion. In contrast, no method to solve the model of Ising spins coupled to CTs is known. Numerical simulations \cite{Ambjorn:1999gi} indicate that the effect of the spins on the geometry is weak, and vice versa, so that the critical exponents do not change from the Onsager values and $d_H=2$; this is corroborated by weak coupling  expansions \cite{Benedetti:2006zy}, and is also true of the generalization  with the 3-state Potts model coupled to CT \cite{Ambjorn:2008jg}. There are only a few rigorous results which we now discuss briefly.

For Ising spins on a fixed CT, $G\in\mathcal C$, drawn from the ensemble with the measure $\rho$ \eqref{equivinfty}, the existence of a non-magnetised single phase at weak coupling, and multiple phases at strong coupling, was proved in \cite{Krikun:2008}. It was also proved that the critical coupling is almost surely independent of $G$ for such quenched systems. These estimates make extensive use of the tree correspondence. However, they do not easily extend to the CDT case, because then $G$ has to be chosen with a measure $\rho'\ne\rho$ that includes the effect of the spin partition function in the graph weight. This annealed model was studied in \cite{Napolitano:2015poa} where an upper bound on the radius of convergence, expressed as a function of the spin coupling strength, %, $g_c(J)$,
of the grand canonical ensemble was found.  It was also shown that at weak coupling the magnetization of the spin at the vertex $S_0$ on the disk vanishes. Similar results for the annealed $q$-state Potts model coupled to CTs were obtained in \cite{Hernandez:2016qmc}. There is still no proof known to us of the existence of multiple phases at strong coupling in the annealed models.

\section{Where next?\label{sec:wherenext}}

As we have seen through this article, there are several outstanding questions concerning CDT, and its extensions with extra degrees of freedom coupled to the geometry, which we draw together here. Firstly, the relationship between the grand canonical ensemble, where only expectation values can be computed, and the infinite graph ensemble, where the graphs dominating the measure have many properties in common, is unclear. The `observable' quantities naturally defined in one differ subtly from those naturally defined in the other. 
This relationship could be established by the construction of a limiting distribution of continuum surfaces corresponding to the grand canonical ensemble scaling limit. Secondly, much remains to be established for the Ising+CT annealed model. A proof that it magnetises at finite coupling, as seems almost certain from numerical work, would be significant progress; in the infinite graph ensemble this would involve establishing how the measure differs from $\rho$. Assuming the model does have a continuous phase transition, it would be interesting to establish rigorously whether the critical exponents are shifted from the Onsager values.

Obviously it is of interest to study the CDT model in higher dimensions.
In 3 dimensions some progress has been made, see, e.g., 
\cite{Ambjorn:2001br}, and it has been
proved that the number of different 3-dimensional CT grows exponentially
with the volume so the partition functions one would like to analyse
do converge for a range of coupling constants \cite{Durhuus:2014dbl}.
In 4 dimensions there are essentially only numerical results as described in 
other contributions % of Ambj\o rn and Loll 
to this book.  An important 
outstanding problem is to prove an exponential bound on the number of 4-dimensional CT as a function of volume.  In 
\cite{Durhuus:2014dbl} it is shown that such a bound would follow 
from an exponential bound on the number of unrestricted 3-dimensional 
triangulations of the sphere, as a function of volume.

\begin{acknowledgement}
JFW's research was funded by Research England. BD acknowledges support from Villum Fonden via the QMATH Centre of Excellence (Grant no. 10059). For the purpose of Open Access, the authors have applied a CC  BY public copyright licence to any Author Accepted Manuscript version arising from this article.

\end{acknowledgement}

\providecommand{\href}[2]{#2}\begingroup\raggedright\endgroup

%\bibliographystyle{JHEP} % We choose the "unsrt" reference style
%\bibliographystyle{spphys} 
%\bibliographystyle{apsrev} 
%\bibliography{Bibliography}

\end{document}